\begin{document}

\newcommand{\junk}[1]{}

\title{Balls and Funnels: Energy Efficient Group-to-Group Anycasts}
\titlerunning{Balls and Funnels}  
%
\author{Jennifer Iglesias\inst{1}  Rajmohan Rajaraman\inst{2} \and R. Ravi\inst{1} \and Ravi Sundaram\inst{2}} 
\authorrunning{Iglesias, Rajaraman, Ravi, Sundaram} 
%
%
\institute{Carnegie Mellon University, Pittsburgh PA USA \\
\email{\{jiglesia, ravi\}@andrew.cmu.edu }
\and
Northeastern University, Boston MA USA\\
\email{\{rraj, koods\}@ccs.neu.edu}}

\maketitle              

\begin{abstract}

We introduce  group-to-group anycast  (g2g-anycast), a  network design
problem   of  substantial   practical   importance  and   considerable
generality.   Given  a  collection  of  groups  and  requirements  for
directed connectivity  from source  groups to destination  groups, the
solution   network   must   contain,    for   each   requirement,   an
omni-directional  down-link broadcast,  centered  at any  node of  the
source group,  called the ball; the  ball must contain some  node from
the  destination group  in the  requirement and  all such  destination
nodes in  the ball  must aggregate  into a  tree directed  towards the
source, called the funnel-tree.  The  solution network is a collection
of balls along with the funnel-trees they contain.  g2g-anycast models
DBS (Digital Broadcast Satellite), Cable  TV systems and drone swarms.
It generalizes  several well  known network design  problems including
minimum   energy   unicast,    multicast,   broadcast,   Steiner-tree,
Steiner-forest  and Group-Steiner  tree.  Our  main achievement  is an
$O(\log^4     n)$     approximation,     counterbalanced     by     an
$\log^{(2-\epsilon)}n$ hardness of approximation, for general weights.
Given  the  applicability  to  wireless communication,  we  present  a
scalable and  easily implemented $O(\log n)$  approximation algorithm,
Cover-and-Grow  for fixed-dimensional  Euclidean space  with path-loss
exponent at least 2.

\keywords{Network design, wireless, approximation}
\end{abstract}
\section{Introduction}
\subsection{Motivation}

Consider a  DBS (Digital Broadcast  Satellite) system such as  Dish or
DIRECTV  in the  USA (see  Fig.~\ref{fig:DBS}).  The  down-link is  an
omni-directional broadcast from constellations of satellites to groups
of  apartments  or   neighborhoods  serviced  by  one   or  more  dish
installations.  The up-link is sometimes a wired network but in remote
areas it is usually structured  as a tree consisting of point-to-point
wireless  links  directed  towards  the  network  provider's  head-end
(root).   The  high  availability  requirement of  such  services  are
typically  satisfied by  having multiple  head-ends and  anycasting to
them. The  same architecture  is found  in CATV  (originally Community
Antenna  TV), or  cable  TV  distribution systems  as  well as  sensor
networks where an omni-directional broadcast  from a beacon is used to
activate  and  control the  sensors;  the  sensors then  funnel  their
information back  using relays.   Moreover, this architecture  is also
beginning to emerge in drone  networks, for broadcasting the Internet,
by   companies  such   as   Google   \cite{McNeal14}  and   Facebook's
Connectivity Labs  \cite{Lapowsky14}. The Internet is  to be broadcast
from drones flying fixed patterns in  the sky to a collection of homes
on the ground. The Internet up-link  from the homes is then aggregated
using  wireless links  organized as  a  tree to  be sent  back to  the
drones. Anycasting  is an integral part  of high-availability services
such as  Content Delivery Networks (CDNs)  where reliable connectivity
is achieved  by reaching some node  in the group.  What  is the common
architecture  underlying  all  these  applications  and  what  is  the
constraining  resource  that  is  driving their  form?\\  \indent  The
various distribution  systems can be  abstractly seen to consist  of a
down-link   \emph{ball}  and   an   up-link  \emph{funnel-tree}   (see
Fig.~\ref{fig:DBS}).  The ball is an omni-directional
\begin{wrapfigure}{r}{0.5\textwidth}
\vspace{-20pt}
\includegraphics[width=0.45\textwidth]{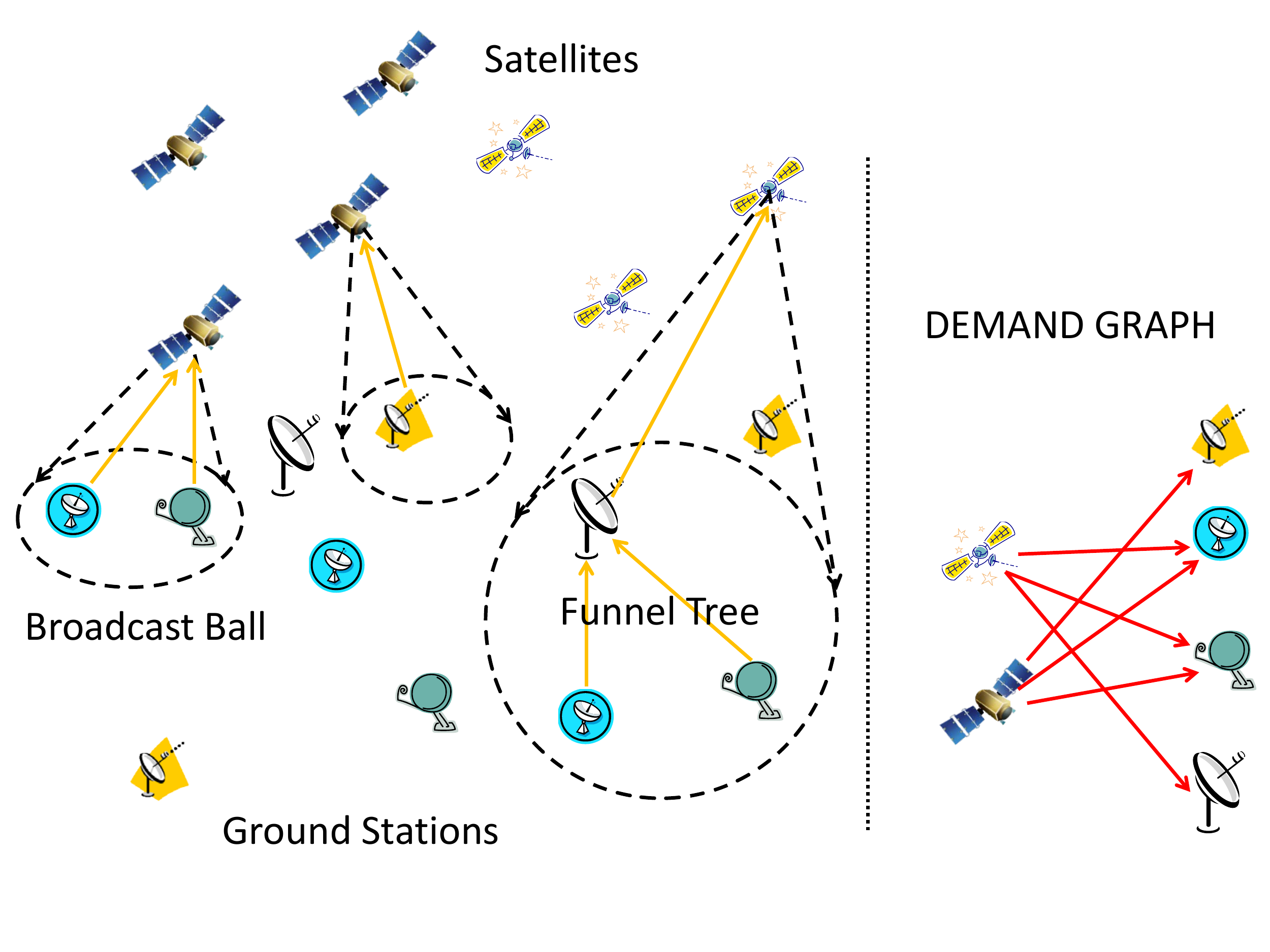}
\caption{Pictogram  of  Digital  Broadcast  Satellite  System  with  2
  satellite groups and 4 ground station groups on left with associated
  demand  graph on  the right.   The  broadcast balls  are denoted  by
  dotted black lines, and the funnel trees by solid yellow lines }
\label{fig:DBS}
\vspace{-15pt}
\end{wrapfigure}
broadcast from
the publisher or content-producer to a large collection of subscribers
or  content-consumers.    At  the   same  time,  the   consumers  have
information that they  need to dynamically send back  to the publisher
in order to convey their preferences and requirements. The funnel-tree
achieves this  up-link efficiently in  terms of both time  and energy.
Aggregation  of information  and use  of  relays uses  less energy  as
compared  to omni-directional  broadcasts  by each  node  back to  the
publisher   and   also  avoids   the   scheduling   needed  to   avoid
interference.  In  this  work,  we focus  primarily  on  total  energy
consumption.   The  application  scenarios mentioned  in  the  opening
paragraph    are    all     energy    sensitive.    Sensor    networks
\cite{MilyekovskiSS13}   and   drone    fleets   \cite{Olsson11}   are
particularly  vulnerable  to energy  depletion.   For  the purpose  of
energy  conservation, generally  each  wireless  node can  dynamically
adjust its transmitting  power based on the distance  of the receiving
nodes and background noise. In the most common power-attenuation model
\cite{Rappaport01},  the signal  power  falls as  $\frac{1}{r^\kappa}$
where $r$  is the distance  from the  transmitter to the  receiver and
$\kappa$  is the  path-loss  exponent -  a constant  between  2 and  4
dependent  on   the  wireless  environment.   A   key  implication  of
non-linear power attenuation is  that relaying through an intermediate
node can sometimes be more energy efficient than transmitting directly
- a counter-intuitive violation of the  triangle inequality - e.g., in
a triangle $ABC$  with obtuse angle $ABC$,  where $d_{AB}^2+d_{BC}^2 <
d_{AC}^2$.

\subsection{Problem Formulation and Terminology}
\label{subsec:formulation}
In this paper, we consider a general formulation that encompasses a
wide variety of scenarios: given a collection of groups (of nodes)
along with a directed demand graph over these groups the goal is to
design a collection of balls and associated funnel-trees of lowest
cost so that every demand requirement is met - meaning that if there
is an arc from a source group to a destination group then the solution
must have a ball centered at a node of the source group that includes
a funnel-tree containing a node of the destination group.

Formally, we define the {\em  group-to-group anycast} problem, or {\em
  g2g-anycast}, as follows: as input we are given $n$ nodes along with
a collection of  source groups $S_1, S_2,\ldots,S_p$  and a collection
of destination groups $T_1,T_2,\ldots,T_q$  which are subsets of these
nodes; a demand graph on these groups consisting of directed arcs from
source groups  $S_i$ to destination  groups $T_j$.  A  nonegative cost
$c_{uv}$ is  specified between every  pair of  nodes; when a  node $u$
incurs a  cost $C$ in  doing an omni-directional broadcast  it reaches
all nodes  $v$ such that $c_{uv}  \leq C$.  A metric  $d_{uv}$ is also
specified between every pair of nodes  and when a node $u$ connects to
node $v$  in the funnel-tree using  a point-to-point link it  incurs a
cost $d_{uv}$.  A  solution consists of a broadcast  ball around every
source node $s$ (we give a  radius which the source can broadcast to),
and a  funnel tree rooted at  $s$. A demand $S_i,T_j$  is satisfied if
there is  a broadcast ball  from some  $s\in S_i$ which  contains some
$t\in T_j$ and the  funnel tree of $s$ also includes  $t$. The cost of
the solution  is the  sum of  the ball-radii  around the  source nodes
(under the broadcast costs $c$) and the sum of the costs of the funnel
trees (under  the funnel metric  $d$) that connect  all terminal-nodes
used to cover the demands to  the source nodes within whose balls they
lie.  We do  not allow funnel trees  to share edges (even  if they are
going to the same source group), and will pay for each copy of an edge
used.
\begin{itemize}
\item First,  the bipartite demand  graph is  no less general  than an
  arbitrary demand  graph since  a given  group can  be both  a source
  group and destination group.
\item Second, since funnel trees sharing the same edge pay seperately,
  solutions  to the  problem decompose  across the  sources and  it is
  sufficient to solve the case where  we have exactly one source group
  $S   =   \{s_1,   s_2,\ldots,    s_k\}$   and   destination   groups
  $T_1,T_2,\ldots,T_q$ (i.e. the demand graph  is a star consisting of
  all  arcs $(S,  T_j),  1 \leq  j \leq  q$).   This observation  also
  enables parallelized implementations.
\item Lastly,  there is  no loss  of generality  in assuming  a metric
  $d_{uv}$ for  funnel-tree costs;  even if  the costs  were arbitrary
  their metric  completion is  sufficient for determining  the optimal
  funnel-tree.
\end{itemize}

We refer collectively  to the (ball) costs  $c_{uv}$ and (funnel-tree)
metric distances $d_{uv}$ as \emph{weights}. In this paper we consider
two cases -  one, the general case where the  weights can be arbitrary
and two, the special case where  the nodes are embedded in a Euclidean
space and all weights are induced from the embedding.

\subsection{Our Contributions}

\begin{figure}[h]
\vspace{-20pt}
\begin{center}
\begin{tabular}{ | c | c | c | c |}
\hline
   & g2g, any metric & g2s, any metric & g2g, $\ell_2^2$ norm \\ \hline
	Upper & $O(\log^4 n)$ & $2\ln n$ & $O(\log n)$ \\  \hline
	Lower & $\Omega(\log^{2-\epsilon} n)$ & $\Omega(\log n)$  & $(1-o(1))\ln n$ \\  \hline
\end{tabular}
\end{center}
\caption{A summary of upper and lower bounds achieved in the different
  problems.  The lower bound holds for every fixed $\epsilon >0$}
\label{results}
\vspace{-20pt} 
\end{figure}

Our  main results  on the  minimum energy  g2g-anycast problem  are as
follows:
\begin{enumerate}
\item   We  present   a  polynomial-time   $O(\log^4n)$  approximation
  algorithm  for the  g2g-anycast problem  on $n$  nodes with  general
  weights.  We  complement this with an  $\Omega(\log^{2-\epsilon} n)$
  hardness of  approximation, for  any $\epsilon >  0$ (Section
  \ref{sec:general-g2g}).
\item  One   scenario  with  practical  application   is  where  every
  destination group is a singleton set while source groups continue to
  have  more  than  one  node;  we  refer  to  this  special  case  of
  g2g-anycast  as  {\em  g2s  anycast}.   We  present  a  {\em  tight}
  logarithmic   approximation   result    for   g2s-anycast   (Section
  \ref{sec:singleton}).
\item For the realistic scenario where the nodes are embedded in a 2-D
  Euclidean plane with  path-loss exponent $\kappa \geq  2$, we design
  an efficient $O(\log n)$-approximation algorithm Cover-and-Grow, and
  also  establish a  matching  logarithmic  hardness of  approximation
  result (Section \ref{sec:l2sq-g2g}).
\item Lastly, we compare  Cover-and-Grow with 4 alternative heuristics
  on random 2-D  Euclidean
  instances;  we discover  that  Cover-and-Grow does  well  in a  wide
  variety of  practical situations in  terms of both running  time and
  quality,  besides   possessing  provable  guarantees.    This  makes
  Cover-and-Grow  a go-to  solution  for  designing near-optimal  data
  dissemination networks in the wireless infrastructure space (Section
  \ref{sec:simulations}).
\end{enumerate}

\subsection{Related Work}

A variety of power attenuation  models for wireless networks have been
studied  in  the  literature  \cite{Rappaport01}.   Though  admittedly
coarse, the model based on the  path loss exponent (varying from 2, in
free  space to  4,  in  lossy environments)  is  the  standard way  of
characterizing  attenuation  \cite{pathloss-wiki}.   The  problems  of
energy  efficient multicast  and  broadcast in  this  model have  been
extensively                                                    studied
\cite{WieselthierNE00,WieselthierNE01,WanCL+02,LiLHJ07}.   Two  points
worth mentioning in this context are: one, we consider the funnel-tree
as consisting of point-to-point  directional transmissions rather than
an omni-directional broadcast since the nonlinear cost of energy makes
it more economical to relay through  an intermediate node, and two, we
consider only energy spent in transmission but not in reception.

Network   design  problems   are  notoriously   NP-hard.   Over   time
sophisticated  approximation techniques  have been  developed, ranging
from linear  programming and randomized rounding  to metric embeddings
\cite{WilliamsonS11}.  The g2g-anycast problem with general weights is
a  substantial  generalization  including  problems  such  as  minimum
spanning trees,  multicast trees,  broadcast trees, Steiner  trees and
Steiner forests. Even  the set cover problem can be  seen as a special
case  where the  destination groups  are singletons.   The g2g-anycast
also  generalizes   the  much   harder  group  Steiner   tree  problem
\cite{GargKR00,HalperinKKSW07}.


\junk{
\subsection{Outline} We first consider the most general version of the
g2g-anycast    problem    with    arbitrary   weights    in    Section
\ref{sec:general-g2g}.  Section \ref{sec:singleton} treats the special
case where  destination groups are singletons but  weights continue to
be arbitrary.   In Section  \ref{sec:l2sq-g2g} we study  the situation
where  the   nodes  are  embedded  in  Euclidean   space  and  present
Cover-and-Grow   the   central   result   of  this   paper.    Section
\ref{sec:simulations} presents the findings from extensive simulations
comparing Cover-and-Grow  with four alternate  heuristics.  Finally we
conclude  with a  summary of  our findings  and directions  for future
research in Section \ref{sec:conclusion}.
} 

\section{Approximating g2g-anycast}
\label{sec:general-g2g}

In this  section, we  present an  $O(\log^4 n)$-approximation  for the
g2g-anycast  problem  with  general  weights by  a  reduction  to  the
generalized set-connectivity  problem.  We then give  a reduction from
the  group Steiner  tree problem  that demonstrates  that there  is no
polynomial-time  $\log^{2-\epsilon}   n$-approximation  algorithm  for
g2g-anycast unless $P=NP$.

\subsection{Approximation algorithm for g2g-anycast with general weights}
The generalized set-connectivity  problem~\cite{ChekuriEGS11} takes as
input an edge-weighted  undirected graph $G = (V,  E)$, and collection
of demands $\{(S_1,T_1), \ldots,  (S_k,T_k)\}$, each pair are disjoint
vertex  sets.  The  goal is  to  find a  minimum-weight subgraph  that
contains a path from any node in  $S_i$ to any node in $T_i$ for every
$i \in \{1, \ldots, k\}$. Without loss of generality, the edge weights
can be  assumed to  form a  metric. Chekuri  et al~\cite{ChekuriEGS11}
present an $O(\log^2 n \log^2 k)$-approximation for this problem using
minimum density junction trees.

We show a reduction from  the g2g-anycast problem with general weights
to the generalized set-connectivity problem.  Recall that without loss
of generality, we may  assume that in the g2g problem,  we are given a
single  source group  $S$, a  collection of  destination groups  $T_1,
\ldots, T_q$, nonegative (broadcast) costs $c_{uv}$, and (funnel-tree)
metric costs $d_{uv}$.



\subsection{The Reduction}

\begin{wrapfigure}{r}{0.5\textwidth}
\vspace{-25pt}
\includegraphics[width=0.45\textwidth,trim={0 3cm 0 3.5cm},clip]{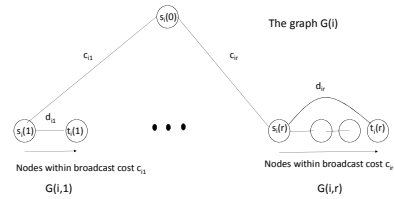}
\caption{A  connected component  $G(i)$  in  the
  reduction  of the g2g-anycast  problem with  general weights  to the
  generalized set connectivity problem. }
\label{fig:relay}
\vspace{-20pt}
\end{wrapfigure}
The main  idea of the reduction  is to overload the  broadcast cost of
the ball radius  around each node in the source group  $S$ into a larger
single  metric  in  which  we  use  the  generalized  set-connectivity
algorithm.  In particular, for every  source node $s_i \in S$, we sort
the  nodes  in $T_1  \cup  \ldots \cup  T_q$  in  increasing order  of
broadcast cost from $s_i$ to  get the sorted order, say $t^i_1, \ldots,
t^i_r$ where $t^i_j$  is at distance $c_{ij}$ from  $s_i$, and we have
$c_{i1} \leq c_{i2} \ldots \leq  c_{ir}$, where $|T_1 \cup
\ldots  \cup T_q| =  r$. We  now build  $r$ different  graphs $G(i,1),
\ldots, G(i,r)$ where  $G(i,j)$ is a copy of  the metric completion of
$G$ under  the funnel tree costs  $d$ induced on the  node set $\{s_i,
t^i_1, \ldots, t^i_j  \}$, with the  copies denoted as $\{s_i(j),
t^i_1(j), \ldots, t^i_j(j) \}$.   (Note that the terminal node $t^i_a$
appears in copies  $a$ through $r$.)  Finally, we  take the $r$ copies
of the node $s_i$ denoted $s_i(1), s_i(2), \ldots, s_i(r)$ and connect
them to a  new node $s_i(0)$ where the cost of  the edge from $s_i(j)$
to  $s_i(0)$ is  $c_{ij}$. Thus  these $r$  different  copies $G(i,1),
\ldots, G(i,r)$ all  connected to the new node  $s_i(0)$ together form 
one connected component $G(i)$.  We  now repeat this process for every
source node  $s_i$ for  $i \in \{1,  \ldots,k\}$ to get  $k$ different
graphs $G(1), \ldots, G(k)$.

We are now  ready to define the generalized  set connectivity demands.
We  define a  new super  source set  $SS =  \{s_1(0), s_2(0),  \ldots,
s_k(0) \}$.  For each of the  destination groups $T_x$, we  define the
terminal set $TT_x$ to be the union of the copies of all corresponding
terminal nodes  in any of  the copies  $G(i)$. More precisely  $TT_x =
\{\cup_i t^i_a(j)| a \leq j \leq r, t^i_a \in T_x\}$. The final demand
pairs  for the  set connectivity  problem are  $\{ (SS,TT_1),  \ldots,
(SS,TT_q)\}$.

\begin{lemma}
  Given an  optimal solution  to the g2g-anycast problem, there  is a
  solution to  the resulting set connectivity  problem described above
  of the same cost.
\end{lemma}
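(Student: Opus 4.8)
The plan is to start from any fixed optimal g2g-anycast solution, normalize it slightly, and then read off a feasible solution of the generalized set-connectivity instance component by component, paying exactly the same on each piece.

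First I would normalize the optimum so that every ball radius coincides with one of the discrete values $c_{ij}$. For each source node $s_i$, let $\rho_i$ be the radius of its broadcast ball. If $s_i$ uses a terminal of rank $a$ (i.e.\ $t^i_a$) to cover some demand, then $c_{ia}\le\rho_i$; letting $j_i$ be the largest such rank, we may shrink $\rho_i$ down to $c_{i,j_i}$ without violating any demand, and this does not increase the cost. If $s_i$ covers no demand, discard its ball and funnel tree altogether. Likewise, replacing the funnel tree $F_i$ of $s_i$ by its minimal subtree spanning $s_i$ together with the terminals it actually serves only decreases $\sum_e d_e$, so we may assume every node of $F_i$ lies in $\{s_i,t^i_1,\dots,t^i_{j_i}\}$ (the terminals of broadcast-rank at most $j_i$). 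After these changes the solution is still optimal and has cost $\sum_i c_{i,j_i}+\sum_i \mathrm{cost}_d(F_i)$.

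Next I would build the set-connectivity solution. For each used source $s_i$, work inside the component $G(i)$: take the single edge $s_i(j_i)\,s_i(0)$, whose cost is exactly $c_{i,j_i}$, together with the copy of $F_i$ lying inside $G(i,j_i)$. This is legitimate because every node of $F_i$ is among $\{s_i,t^i_1,\dots,t^i_{j_i}\}$, all of which appear (as their level-$j_i$ copies) in $G(i,j_i)$, and $G(i,j_i)$ carries the metric-$d$ weights, so the copy of $F_i$ costs exactly $\mathrm{cost}_d(F_i)$. Summing over $i$, and using that distinct components $G(i)$ are edge-disjoint — which matches the g2g rule that funnel trees pay separately — the total cost equals the normalized optimum. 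For feasibility of a demand $(SS,TT_x)$: since $(S,T_x)$ is met in the optimum, some used $s_i$ has a terminal $t^i_a\in T_x$ both in its ball and in $F_i$; then $c_{ia}\le\rho_i=c_{i,j_i}$ forces $a\le j_i$, so $t^i_a(j_i)$ is a vertex of $G(i,j_i)$ and belongs to $TT_x$. The chosen copy of $F_i$ connects $s_i(j_i)$ to $t^i_a(j_i)$, and the edge to $s_i(0)\in SS$ completes the required $SS$--$TT_x$ path.

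The only place that needs genuine care is the middle step: one must be sure that a \emph{single} copy $G(i,j_i)$ simultaneously supports the ball charge (via the edge to $s_i(0)$, priced at $c_{i,j_i}$) and contains \emph{all} terminals that $s_i$ serves. This is exactly what the nested construction buys us — the copies satisfy $G(i,1)\subseteq G(i,2)\subseteq\cdots\subseteq G(i,r)$ along increasing broadcast radius, and $s_i(j)$ reaches $s_i(0)$ at price $c_{ij}$ — so picking $j=j_i$ makes both conditions hold at once. Everything else is routine bookkeeping of costs and a short path-tracing argument.
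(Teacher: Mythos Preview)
Your argument is essentially the paper's own proof: for each source $s_i$, place its funnel tree in the copy $G(i,j_i)$ (where $j_i$ is the rank of the farthest terminal it serves) and attach the edge $s_i(j_i)s_i(0)$ to account for the broadcast cost, then observe that costs add up piece by piece. You are simply more explicit than the paper about the preliminary normalization (shrinking radii to $c_{i,j_i}$) and about verifying feasibility of each $(SS,TT_x)$ demand.
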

\begin{proof}
Suppose  the solution of  the g2g  problem involved  picking broadcast
ball radii  $c_1, \ldots,  c_k$ from source  nodes $s_1,  \ldots, s_k$
respectively.   We also  have  funnel trees  $H_1,  \ldots, H_k$  that
connect  terminals  $T(H_1), \ldots,  T(H_k)$  to  $s_1, \ldots,  s_k$
respectively.   Note that  all terminals  in $T(H_x)$  are  within the
thresholds that receive the broadcast  from $s_x$, i.e. for every such
terminal $t \in H_x$, the broadcast cost of the edge between $s_x$ and
$t$  is  at  most  the  radius  threshold  $c_x$  at  which  $s_x$  is
broadcasting.

Consider the tree $H_x$ with  terminals $T(H_x)$ connected to the root
$s_x$, so that  $c_x$ is the largest  weight of any of  the edges from
$s_x$ to any  terminal in $T(H_x)$. (If all of  them were even closer,
we can reduce the broadcast cost  $c_x$ of broadcasting from $s_x$ and
reduce the cost  of the g2g solution.) Let the  terminal in the funnel
tree with  this broadcast cost  be $t(x)$ and  in the sorted  order of
weights from $s_x$ let the rank of $t(x)$ be $p$.  We now consider the
graph copy $G(x,p)$ and  take a copy of the funnel  tree $H_x$ in this
copy.  To  this we  add an  edge from  the root  $s_x(p)$ to  the node
$s_x(0)$ of cost  $c_{xp}$. The total cost of this  tree thus contains
the funnel  tree cost of  $H_x$ (denoted by  $d(H_x)$) as well  as the
broadcast  cost of  $c_{xp}$ from  $s_x$.   Taking the  union of  such
funnel trees over all the copies gives the lemma.
\end{proof}

\begin{lemma}
  Given an optimal solution  to the set connectivity problem described
  above, there is a solution to  the g2g-anycast problem from which it
  was derived of the same total weight.
\end{lemma}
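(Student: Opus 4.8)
The plan is to reverse the construction of Lemma~1: starting from an (optimal) solution $F$ to the generalized set-connectivity instance, I would read off, for each source node $s_i$, a broadcast radius and a funnel tree, and then show that the resulting g2g solution is feasible and costs no more than $F$.

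First I would simplify $F$. Since each super-source $s_i(0)$ is adjacent only to the copies $s_i(1),\dots,s_i(r)$ inside its own component $G(i)$, the subgraph $F$ decomposes edge-disjointly across the components $G(1),\dots,G(k)$; moreover, within $G(i)$ the only way to reach a terminal copy living in the sub-copy $G(i,j)$ starting from $s_i(0)$ is through the edge $(s_i(0),s_i(j))$ of cost $c_{ij}$ (any other route would have to revisit $s_i(0)$, so cannot occur on a simple path). We may also assume $F$ is a forest, discarding redundant edges, which only decreases its weight.

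Next, for each $i$, let $j^\ast$ be the largest index $j$ with $(s_i(0),s_i(j))\in F$ (if no such edge exists, assign $s_i$ an empty ball and empty funnel tree). I would set the broadcast radius of $s_i$ to $c_i=c_{i j^\ast}$, and take the funnel tree $H_i$ to be the image of $\bigcup_{j}(F\cap G(i,j))$ under the map that collapses every copy of an original node to that node ($s_i(j)\mapsto s_i$, $t^i_a(j)\mapsto t^i_a$), keeping a spanning tree of the component containing $s_i$. Feasibility: every demand $(SS,TT_x)$ is satisfied in $F$ by a path from some $s_i(0)$ to a copy $t^i_a(j)$ with $t^i_a\in T_x$; that path uses $(s_i(0),s_i(j))$, so $j\le j^\ast$, and since $t^i_a$ is present in $G(i,j)$ its rank satisfies $a\le j\le j^\ast$, hence $c_{ia}\le c_{i j^\ast}=c_i$. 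Thus $t^i_a$ lies in the broadcast ball of $s_i$ and, being on the lifted path, also in $H_i$, so the g2g demand $(S,T_x)$ is met. For the cost, $\mathrm{funnelcost}(H_i)\le\sum_j\sum_{e\in F\cap G(i,j)}d(e)$ and $c_i=c_{i j^\ast}\le\sum_{(s_i(0),s_i(j))\in F}c_{ij}$; summing over $i$ (the $G(i)$ are edge-disjoint in $F$) gives total g2g cost $\le\mathrm{cost}(F)$. Combined with Lemma~1, which yields $\mathrm{cost}(F)=\mathrm{OPT}_{\mathrm{sc}}\le\mathrm{OPT}_{\mathrm{g2g}}$, this forces equality, so the extracted solution has the same total weight.

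The main obstacle is the bookkeeping in the last step: checking that the funnel subtrees sitting in the various sub-copies $G(i,j)$ of a single component genuinely glue into one tree rooted at $s_i$ once copies are identified, and that this identification never increases the funnel cost (it can only create multi-edges or cycles, which we delete), while also making sure the single chosen radius $c_{i j^\ast}$ really dominates every sub-copy actually used — which is precisely where the sortedness $c_{i1}\le\cdots\le c_{ir}$ and the containment pattern ``$t^i_a$ appears in copies $a,\dots,r$'' are used.
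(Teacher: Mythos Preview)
Your argument is correct, but it takes a somewhat different tack from the paper. The paper exploits optimality of $F$ directly: it argues that for each source $s_x$, the solution can use \emph{at most one} edge of the form $(s_x(0),s_x(q))$, since if both $(s_x(0),s_x(p))$ and $(s_x(0),s_x(q))$ with $p<q$ were present, the containment $G(x,p)\subset G(x,q)$ lets you move all edges of $F\cap G(x,p)$ into $G(x,q)$ and drop the edge $(s_x(0),s_x(p))$, contradicting optimality. With uniqueness in hand, there is nothing to merge: the broadcast radius is $c_{xq}$, the funnel tree is exactly $F\cap G(x,q)$, and the costs match on the nose.

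You instead allow multiple such edges, pick the largest index $j^\ast$, collapse and merge the funnel pieces across all used sub-copies, and obtain only an inequality $\mathrm{cost}(\text{g2g solution})\le\mathrm{cost}(F)$; you then close the loop via Lemma~1 to force equality. This is sound, and has the mild advantage that the inequality step works for any feasible $F$, not just an optimal one (which is really all the subsequent approximation theorem needs). The price is the extra bookkeeping you flag in your last paragraph --- gluing the collapsed pieces into a single tree rooted at $s_i$ --- which the paper avoids entirely by the uniqueness observation.
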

\begin{proof}
In the other direction, consider each copy $G(x)$ in turn and consider
the set  of edges in the  tree containing the source  node $s_x(0)$ in
the  solution  to the  generalized  set-connectivity instance.   First
notice that  it contains at most one  of the edges to  a copy $s_x(q)$
for some $q$. Indeed if we have edges to two different copies $s_x(p)$
and $s_x(q)$  from $s_x(0)$  for $p <  q$, then since  $G(x,p) \subset
G(x,q)$, we  can consider the tree  edges in $G(x,p)$ and  buy them in
$G(x,q)$ where they  also occur to cover the same  set of terminals at
smaller cost. In this way, we  can save the broadcast cost of the copy
of the edge from $s_x(0)$  to $s_x(p)$ contradicting the optimality of
the solution. Now that we have  only one of the edges, say to $s_x(q)$
from $s_x(0)$, we  can consider all the edges of the  tree in the copy
$G(x,q)$ and include these edges in a funnel tree $H'_x$. The distance
of the edge  from $s_x(0)$ to $s_x(q)$ pays  for the broadcasting cost
from $s_x$  in the original instance and  the cost of the  rest of the
tree is  the same  as the funnel  tree cost  of $H'_q$ (Note  that our
observation above implies  that edges in the metric  completion in the
tree can be converted to paths  in the graph and hence connect all the
nodes in the tree).

Since
every terminal  superset $TT_j$  is connected to  some source  node of
$SS$, all  the demands  of the  g2g problem must  be satisfied  in the
collection of  funnel trees  $H'_x$ constructed in  this way  giving a
solution to the g2g problem of the same cost.
\end{proof}

The above  two lemmas with the result  of~\cite{ChekuriEGS11} gives us
the following result.
\begin{theorem}
  The  general weights  version of  the g2g-anycast  problem with  $k$
  destination groups admits  a polynomial-time approximation algorithm
  with performance ratio $O(\log^2(k) \log^2n)$ in an $n$-node graph.
\end{theorem}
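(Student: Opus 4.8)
The theorem follows almost immediately by composing the two reduction lemmas with the known approximation result for generalized set-connectivity, so my plan is essentially to verify that the reduction is polynomial-time and that the parameters translate correctly.

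\medskip

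\noindent\textbf{Proof plan.} The plan is to argue that the reduction described above is a polynomial-time reduction and then chain together the two lemmas. First I would check the size of the constructed instance: for each of the $k$ source nodes $s_i$ we build $r \le n$ graph copies $G(i,1),\ldots,G(i,r)$, each on at most $n+1$ vertices, plus one hub node $s_i(0)$; thus the total number of vertices and edges in the generalized set-connectivity instance is polynomial in $n$ (at most $O(n^2 k)$ vertices, say), and the construction of the demand pairs $(SS, TT_x)$ is clearly polynomial as well. One should also confirm the demands are well-formed: $SS$ and each $TT_x$ are disjoint vertex sets (the copies of source nodes are distinct from copies of terminal nodes), so the instance is a legal input to the algorithm of~\cite{ChekuriEGS11}.

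\medskip

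\noindent Next I would combine the pieces. Let $N$ denote the number of vertices in the reduced instance, so $N = \mathrm{poly}(n)$ and hence $\log N = \Theta(\log n)$; the number of demand pairs in the reduced instance is exactly $q = k$ (one per destination group $T_x$). By the first lemma, the optimal g2g-anycast cost is at least the optimal cost of the reduced set-connectivity instance; by the second lemma, any set-connectivity solution of cost $C$ yields a g2g-anycast solution of cost at most $C$. Running the $O(\log^2 N \log^2 k)$-approximation of~\cite{ChekuriEGS11} on the reduced instance therefore produces a g2g-anycast solution whose cost is at most $O(\log^2 N \log^2 k) = O(\log^2 n \log^2 k)$ times the optimum, which is the claimed bound.

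\medskip

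\noindent\textbf{Main obstacle.} There is no deep obstacle here; the real content was already discharged in the two preceding lemmas, in particular the observation in the second lemma that an optimal set-connectivity solution uses at most one edge from each hub $s_i(0)$ into its family of copies (so that the broadcast radius is well-defined) and that metric-completion edges in a chosen copy $G(i,q)$ can be expanded back into paths in the original funnel metric. The one thing worth stating carefully in the write-up is the bookkeeping that $\log$ of the blown-up instance size is still $\Theta(\log n)$, so that the $\log^2 N$ factor does not degrade to $\log^4 n$; this is where a reader might otherwise worry, and it is the only point I would be careful to spell out.
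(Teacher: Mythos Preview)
Your proposal is correct and follows exactly the paper's approach: the paper's own ``proof'' of this theorem is the single line ``The above two lemmas with the result of~\cite{ChekuriEGS11} gives us the following result,'' and you have simply fleshed out the bookkeeping (polynomial size of the reduced instance, $\log N = \Theta(\log n)$, number of demand pairs equals the number of destination groups). One small caution on notation: you use $k$ both for the number of source nodes in $S$ (as in the reduction section) and for the number of destination groups (as in the theorem statement); in the write-up keep these distinct so that the $\log^2 k$ factor unambiguously refers to the number of demand pairs.
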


\subsection{Hardness of approximating g2g-anycast}

We  observe that  the  g2g-anycast problem  with  general weights  can
capture the group Steiner tree problem which is known to be $\log^{2 -
  \epsilon}  n$-hard  to  approximate  unless  $NP$  is  contained  in
quasi-polynomial time~\cite{HalperinK03}.

In the  group Steiner tree problem,  we are given an  undirected graph
with metric  edge costs,  a root  $s$ and  a set  of subsets  of nodes
called  groups, say  $T_1, \ldots,  T_g$, and  the goal  is to  find a
minimum cost tree  that connects the root with at  least one node from
each group. We can easily define  this as a g2g-anycast problem with a
singleton  source group  $S=\{s\}$  with the  single  root node.   The
terminal sets for the g2g-anycast problem are the groups $T_1, \ldots,
T_g$, with the demand graph $(S,T_1), \ldots, (S,T_g)$. We can set the
broadcast costs of any  node in the graph from $s$ to  be zero; we use
the given metric costs in the group Steiner problem as the funnel tree
costs to capture  the cost of the group Steiner  tree. Any solution to
the resulting g2g-anycast  problem is a single tree  connecting $s$ to
at least  one node  in each of  the groups as  required and  its total
weight is just  its funnel tree cost that reflects  precisely the cost
of this  feasible group  Steiner tree  solution. The  hardness follows
from this approximation-preserving reduction.

\section{Approximating g2s-anycast}
\label{sec:singleton}
In  this section,  we  consider  g2s-anycast, a  special  case of  the
g2g-anycast, in which each destination group is a singleton set (i.e.,
has exactly  one terminal).  Let  $S$ denote the source-set  and $t_1,
\ldots, t_q$ denote the terminals.

The desired  solution is  a collection of  broadcast balls  and funnel
trees  $T_v$, each  rooted at  a source  node $v$,  so that  for every
demand $(S, t_j)$, there exists at least one node $v$ in $S$ such that
$t_j \in T_v$.

We  now   present  a  $\Theta(\log  n)$-approximation   algorithm  for
g2s-anycast   problem.    Our   algorithm  iteratively   computes   an
approximation to a minimum density  assignment, which assigns a subset
of as  yet unassigned terminals  to a  source node, and  then combines
these assignments to form the final solution.

{\noindent {\bf  Minimum density assignment.}}   We seek a  source $s$
and  a tree  $T_s$ rooted  at $s$  that connects  $s$ to  a subset  of
terminals, such that the ratio  $(c(T_s) + d(T_s))/|T_s|$ is minimized
among all choices of $s$ and  $T_s$ (here $c(T_s)$ denotes the minimum
broadcast cost for $s$ to reach  the terminals in $T_s$, while $d(T_s$
denotes the  funnel-tree cost,  i.e. the sum  of the  metric distances
$d_{uv}$   over   all   edges   $uv   \in   T_s$).    We   present   a
constant-approximation  to  the  problem,   using  a  constant  factor
approximation  algorithm  for the  rooted  $k$-MST  problem, which  is
defined as follows: given a graph $G$ with weights on edges and a root
node,  determine a  tree of  minimum weight  that spans  at least  $k$
vertices.  The best known approximation factor for the $k$-MST problem
~\cite{RSM+94} is  2~\cite{Garg05}.  We now present  our algorithm for
minimum density assignment.
\begin{itemize}
\item For  each source $s \in  S$, integer $k \in  [1,n]$, and integer
  $r$ drawn from the set $\{c_{st_j|1\leq j\leq q}\}$:
\begin{itemize}
\item Let  $G'$ denote the graph  with vertex set $\{s\}  \cup \{t_j |
  c_{st_j} \le r\}$, and edge weights given by $d$.
\item Compute  a 2-approximation  $T'(s,r,k)$ to the  $k$-MST problem
  over the graph $G'$ with $s$ being the root.
\end{itemize}
\item Among all trees computed in  the above iterations, return a tree
  that minimizes $\min_{s,r,k} (d(T'(s,r,k)) + r)/k$.
\end{itemize}

\begin{lemma}
\label{lem:min_density}
The above algorithm is a polynomial-time 2-approximation algorithm for
the minimum density assignment problem.
\end{lemma}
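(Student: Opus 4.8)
The plan is to show that the algorithm's output has density at most twice that of an optimal minimum-density assignment. Fix an optimal solution: a source $s^*$, a broadcast radius $\rho^*$, and a funnel tree $T^*$ rooted at $s^*$ spanning a set of $k^*$ terminals, all reachable from $s^*$ within broadcast cost $\rho^*$. The optimal density is $(\rho^* + d(T^*))/k^*$. First I would observe that the broadcast radius of a minimum-cost solution reaching a given set of terminals is exactly the largest broadcast cost $c_{s^*t_j}$ among those terminals, so without loss of generality $\rho^* = c_{s^*t_{j^*}}$ for some terminal $t_{j^*}$; hence $\rho^*$ lies in the set $\{c_{st_j}\}$ that the algorithm enumerates. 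Thus in some iteration the algorithm considers exactly the triple $(s^*, \rho^*, k^*)$.

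Next I would argue that in that iteration the graph $G'$ contains $s^*$ together with all terminals within broadcast cost $\rho^*$ from $s^*$, so in particular it contains every terminal spanned by $T^*$; therefore $T^*$ itself is a feasible tree in $G'$ spanning at least $k^*$ vertices (counting $s^*$, at least $k^*+1$, but the $k$-MST bound is on the number of spanned vertices and this is a constant-factor statement, so the off-by-one is harmless — I would phrase the $k$-MST call to span $k^*$ terminals directly). Consequently $d(T^*)$ is an upper bound on the weight of an optimal $k^*$-MST in $G'$, and the $2$-approximation guarantee gives $d(T'(s^*,\rho^*,k^*)) \le 2\, d(T^*)$.

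Combining, the density of the tree found in this particular iteration is
\[
\frac{d(T'(s^*,\rho^*,k^*)) + \rho^*}{k^*} \le \frac{2\,d(T^*) + 2\rho^*}{k^*} = 2\cdot\frac{d(T^*) + \rho^*}{k^*},
\]
which is at most twice the optimal density. Since the algorithm returns the tree minimizing $(d(T'(s,r,k)) + r)/k$ over all triples, its output has density no larger than this, proving the approximation factor. Finally I would note that the number of iterations is $O(|S|\cdot n \cdot q)$, each requiring one $k$-MST approximation (polynomial time by \cite{Garg05}), so the overall running time is polynomial.

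The only mildly delicate point — the main thing to get right rather than a genuine obstacle — is the first step: justifying that one may assume $\rho^* = c_{s^*t_{j^*}}$ for a terminal actually used, so that the optimal radius falls inside the enumerated candidate set $\{c_{st_j}\}$, and correspondingly that the $k$-MST subroutine run on $G'$ with threshold $\rho^*$ sees a feasible tree of cost at most $d(T^*)$. Everything after that is the standard "enumerate the optimal parameters, pay a constant factor on the subroutine" argument.
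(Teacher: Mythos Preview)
Your proposal is correct and follows essentially the same approach as the paper's proof: identify the iteration $(s^*,\rho^*,k^*)$ matching the optimal solution, apply the $2$-approximation guarantee of the $k$-MST subroutine to bound $d(T'(s^*,\rho^*,k^*)) \le 2\,d(T^*)$, and conclude via the minimization over all triples. You are somewhat more explicit than the paper in justifying that $\rho^*$ lies in the enumerated radius set and in flagging the harmless off-by-one from counting the root, but the argument is the same.
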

\begin{proof}
  We first show that the algorithm  is polynomial time.  The number of
  different choices for the source equals  the size of the source set,
  the number  of choices for $k$  is $n$, and the  number of different
  values for $r$ is the number  of different broadcast costs, which is
  at most $n$.   Thus the number of  iterations in the for  loop is at
  most $n^3$.  Consider an optimal solution $T$ to the minimum density
  assignment problem, rooted at source $s$.  It is a valid solution to
  the $k$-MST problem in the iteration given  by $s$, $r = c(T)$, $k =
  |T|$.  For this particular iteration, the tree $T'(s,r,k)$ satisfies
  $(d(T'(s,r,k)+r)/k \le (2d(T) + r)/k \leq 2 \cdot (d(T) + r)/k)$.
  Since our algorithm  returns the tree that has the  best density, we
  have a 2-approximation for the minimum density assignment.
\end{proof}

\noindent{{\bf  Approximation algorithm  for  g2s-anycast.}
  Our  algorithm is  a greedy  iterative
  algorithm, in  which we repeatedly  compute an approximation  to the
  minimum density assignment problem,  and return an appropriate union
  of all of the trees computed.
\begin{itemize}
\item
For each source $s$, set $T_s$ to $\{s\}$.
\item
While all terminals are not assigned:
\begin{itemize}
\item Compute a 2-approximation  $T$ to the minimum density assignment
  problem using any source $s$ and the unassigned terminals.
\item If $T$ is rooted at source $s$, then set $T_s$ to be the minimum
  spanning tree of the union of the trees $T$ and $T_s$.
\end{itemize}
\item
Return the collection $\{T_s\}$.
\end{itemize}

\begin{theorem}
  The  greedy   algorithm  yields  an   approximation  algorithm  with
  performance  ratio  $2\ln   n$  to  the g2s-anycast
  problem.
\end{theorem}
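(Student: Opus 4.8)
The plan is to run the classical greedy set-cover analysis, adapted to the cost structure (broadcast ball radii plus funnel-tree metric costs) of g2s-anycast. Two ingredients drive it: a ``residual optimum'' claim that at every iteration a cheap assignment exists, and the usual harmonic-sum accounting as the terminals get assigned; on top of these I must check that repeatedly merging the trees $T_s$ across iterations does not inflate the cost.

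For the residual-optimum claim, fix an optimal g2s-anycast solution of cost $\mathrm{OPT}$, consisting of balls and funnel trees $T^{*}_v$ rooted at the sources $v$. Consider an iteration that begins with $m$ terminals still unassigned. Restrict each $T^{*}_v$ to the portion needed to connect $v$ to the unassigned terminals it covers (retaining Steiner nodes as needed) and to a ball whose radius is at most the original one. This yields a collection of \emph{legal} assignments whose total (broadcast plus funnel) cost is at most $\mathrm{OPT}$ and which together cover all $m$ unassigned terminals; hence by averaging one of them has density at most $\mathrm{OPT}/m$. By Lemma~\ref{lem:min_density}, the assignment $T$ that our algorithm computes in this iteration therefore has density at most $2\,\mathrm{OPT}/m$.

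For the accounting, order the terminals by the iteration in which they are assigned and charge each newly assigned terminal the density of the assignment covering it. An iteration that begins with $m$ unassigned terminals and assigns $k$ of them costs at most $2\,\mathrm{OPT}\cdot k/m$, spread as at most $2\,\mathrm{OPT}/m$ per new terminal; as the quantity $m$ at the start of an iteration runs through a strictly decreasing sequence of values bounded by $q$, the total charge is at most $2\,\mathrm{OPT}\sum_{j=1}^{q}1/j = 2\,\mathrm{OPT}\,H_q$, which is the asserted $2\ln n$ (using $q\le n$ and $H_q=\ln q + O(1)$, or the sharper telescoping estimate $(m-m')/m\le\ln m-\ln m'$); each iteration assigns at least one terminal, so the algorithm halts with every demand satisfied. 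Finally, the cost of the returned family $\{T_s\}$ is at most the sum of the costs of all assignments computed: replacing $T_s$ by the minimum spanning tree of $T_s\cup T$ has funnel cost at most $d(T_s)+d(T)$, and the broadcast radius a source needs after absorbing several assignments is the maximum of their radii, hence at most their sum. Summing over sources and iterations closes the argument.

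The harmonic accounting is routine; the points needing care are in the residual-optimum step --- checking that pruning a funnel tree to a subset of its terminals only decreases cost and that the induced broadcast radii lie among the values the minimum-density subroutine searches over, so that the pruned optimal solution is genuinely a family of legal assignments --- and the observation that merging via MSTs (rather than arbitrary unions) is exactly what keeps the per-iteration costs additive. I do not expect a real obstacle beyond this bookkeeping.
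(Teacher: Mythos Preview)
Your proposal is correct and follows essentially the same approach as the paper: the same averaging argument over the optimal trees to bound the density available at each iteration, the same invocation of the $2$-approximation for minimum density assignment, and the same harmonic-sum summation. If anything, you are more careful than the paper's own argument, which omits the discussion of pruning the optimal trees to unassigned terminals and does not explicitly verify that the MST-merging step keeps the final cost bounded by the sum of the per-iteration costs.
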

\begin{proof}
  Let $OPT$  denote the cost of  the optimal solution  to the problem.
  Any solution is  composed of at most $m$ trees, one  for each of the
  sources, with each  singleton group being included as  a node in one
  of these trees.  Let $T^*_s$ denote the tree rooted at source $s$ in
  an optimal solution.

Consider any  iteration $i$  of our algorithm.   Let $n_i$  denote the
number of unassigned terminals at  the start of the iteration $i$.  By
an averaging argument, we know there exists a source $s$ such that
\[
\frac{d(T^*_s) + c(T^*_s)}{|T^*_s|} \le \frac{OPT}{n_i},
\]
By Lemma~\ref{lem:min_density}, it follows that in the $i$th iteration
of the  greedy algorithm, if $T_i$  is the tree computed  in the step,
then
\[
\frac{d(T_i) + c(T_i)}{|T_i|} \le \frac{2\cdot OPT}{n_i},
\]
Adding over all steps, we obtain that the total cost is
\[
\sum_i (d(T_i) + c(T_i)) \le 2\cdot OPT \cdot \sum_i \frac{|T_i|}{n_i} \le 2\cdot OPT \cdot H_n \le 2OPT \ln n.
\]
\end{proof}

{\noindent {\bf Hardness of approximation}} We complement the positive
result with  a matching inapproximability result which  shows that the
above problem is as hard as set cover.
\begin{theorem}
  Unless  $NP  =  P$  there  is  no  polynomial-time  $\alpha  \ln  n$
  approximation to  the g2s-anycast  problem, for a  suitable constant
  $\alpha > 0$.
\end{theorem}

We defer the proof of this theorem to Appendix~\ref{app:reduction}. 

\section{Euclidean g2g-anycast}
\label{sec:l2sq-g2g}
In this section,  we present a $\Theta(\log  n)$-approximation for the
more realistic version of the g2g-anycast problem in the 2-D Euclidean
plane.   We achieve  our results  by a  reduction to  an appropriately
defined set cover problem.

In detail, all the points in both the source group $S$ and destination
groups $T_1, \ldots, T_q$ lie in  the 2-D Euclidean plane. The cost of
an edge $(u,v)$  is the Euclidean distance between $u$  and $v$ raised
to the path  loss exponent $\kappa$. For the rest  of this section, we
assume that $\kappa = 2$. (The  corresponding results for $\kappa > 2$
follow with very  simple modifications.) First we show  that even this
special  case   of  the  g2g-anycast   problem  does  not   permit  an
approximation algorithm with ratio  $(1-\epsilon)\ln n$ on an instance
with  $n$ nodes  unless $NP$  is in  quasi-polynomial time.   Next, we
present {\em  Cover-and-Grow}, an $O(\log  n)$-approximation algorithm
that applies a  greedy heuristic to an  appropriately defined instance
of the set covering problem.

{\noindent {\bf  Hardness of  2-D g2g-anycast}} Again  we can  prove a
hardness via a reduction from set cover.
\begin{theorem}
  The 2-D Euclidean  version of the g2g problem on  $n$ nodes does not
  permit  a polynomial-time  $(1-o(1))\ln  n$ approximation  algorithm
  unless $NP = P$.
\end{theorem}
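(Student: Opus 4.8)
The plan is to reduce from \textsc{Set Cover}, which by the theorem of Dinur and Steurer admits no polynomial-time $(1-\epsilon)\ln N$-approximation for any fixed $\epsilon>0$ unless $P=NP$, $N$ being the ground-set size; I will start from the near-linear-size form of this hardness, in which the number of sets is $N^{1+o(1)}$. Given such an instance with universe $\{e_1,\dots,e_N\}$ and sets $S_1,\dots,S_\ell$, I would build the following Euclidean g2g instance: for each $i$ place a source node $p_i$ and a single ``relay'' node $g_i$ at Euclidean distance exactly $1$ from $p_i$, and lay out the $\ell$ pairs $(p_i,g_i)$ in $\ell$ mutually well-separated regions, so that any two nodes lying in different regions are at distance at least $D:=10\ell N$. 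All weights are squared Euclidean distances (the case $\kappa=2$), hence $c_{p_ig_i}=d_{p_ig_i}=1$ while every other pairwise weight is at least $(D-2)^2$. Take $S=\{p_1,\dots,p_\ell\}$ as the source group, $T_j=\{g_i : e_j\in S_i\}$ as the destination group of element $e_j$, and $(S,T_1),\dots,(S,T_N)$ as the demands. The instance has $n=2\ell$ nodes.

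For the easy (completeness) direction, a set cover $S_{i_1},\dots,S_{i_k}$ yields the g2g solution that, for each $a$, broadcasts from $p_{i_a}$ with radius $1$ (reaching exactly $g_{i_a}$) and uses the one-edge funnel tree $\{p_{i_a}g_{i_a}\}$ of cost $1$; this meets every demand at total cost $2k$, so the g2g optimum is at most $2\cdot OPT_{SC}$.

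For the soundness direction I would show that any feasible g2g solution of cost $C$ can be converted, in polynomial time, into a set cover of size at most $C/2$. If $C>2\ell$ this is immediate (return all $\ell$ sets). Otherwise $C\le 2\ell<(D-2)^2$, so no broadcast ball can have radius as large as $(D-2)^2$; hence the ball at $p_i$ reaches no relay besides $g_i$, and a nontrivial funnel tree rooted at $p_i$ must consist of the single edge $p_ig_i$ of cost $1$, since every other edge incident to $p_i$ or to $g_i$ costs more than $C$. Let $A=\{\, i : g_i$ lies in the ball at $p_i$ and in the funnel tree at $p_i \,\}$. A demand $(S,T_j)$ can be satisfied only through some $g_i\in T_j$, which forces $i\in A$ and $e_j\in S_i$; thus $\{S_i : i\in A\}$ covers the universe. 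Moreover each $i\in A$ contributes at least $1$ (ball radius) plus $1$ (the $p_ig_i$ funnel edge) to $C$, over disjoint balls and edges, so $|A|\le C/2$. Together with completeness, the g2g optimum equals $2\cdot OPT_{SC}$ and the correspondence is approximation-preserving.

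Finally, a polynomial-time $(1-\epsilon)\ln n$-approximation for Euclidean g2g would, on this $n=2\ell$-node instance, output a solution of cost $C\le(1-\epsilon)\ln(2\ell)\cdot 2\,OPT_{SC}$, hence a set cover of size at most $C/2\le(1-\epsilon)\ln(2\ell)\cdot OPT_{SC}=(1-\epsilon)(1+o(1))\ln N\cdot OPT_{SC}$, using $\ell=N^{1+o(1)}$; for every fixed $\epsilon>0$ and large $N$ this beats the $(1-\epsilon/2)\ln N$ threshold, contradicting the inapproximability of \textsc{Set Cover} under $P=NP$ (and combining over $\epsilon$ gives the $(1-o(1))\ln n$ form). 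I expect the crux to be exactly this last accounting: the tight leading constant survives only because $\ln n$ matches $\ln N$ up to a $1+o(1)$ factor, which in turn relies on invoking the inapproximability of \textsc{Set Cover} on instances with a near-linear number of sets rather than the generic ``polynomially many sets'' statement; supplying or citing that quasi-linear-size version is the one nontrivial ingredient. The sole geometric point to watch — that a single enormous ball must not be allowed to swallow the relays of many sets and break the correspondence — is dispatched cheaply by making the inter-region separation $D$ exceed the trivial cost bound $2\ell$, as arranged above.
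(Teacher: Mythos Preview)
Your proof is correct and follows essentially the same reduction as the paper: well-separated source/relay pairs at unit Euclidean distance, one pair per set, destination groups indexed by elements, with the same $2k \leftrightarrow k$ cost correspondence between set covers and g2g solutions. You are in fact more careful than the paper on the final accounting---the paper simply cites Dinur--Steurer without addressing that the $(1-o(1))\ln n$ constant (with $n$ the number of g2g nodes, hence tied to the number of \emph{sets} rather than elements) survives only under the near-linear-size form of the Set Cover hardness, a point you correctly flag as the one nontrivial ingredient.
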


The proof of this is deferred to Appendix~\ref{app:reduction2}.

\subsection{Cover-and-Grow}
We now describe a matching  $O(\log n)$-approximation for the problem.
For  this we  first need  the following  property of  minimum spanning
trees of points in the 2-D  Euclidean plane within a unit square, when
the costs of any edge in  the tree are the squared Euclidean distances
between the edge's endpoints.

\begin{theorem}~\cite{AichholzerAAB+}
\label{thm:euc}
The weight of a minimum spanning tree  of a finite number of points in
the 2-D Euclidean plane within a  unit square, where the weight of any
edge is the square of the Euclidean distance between its endpoints, is
at most 3.42.
\end{theorem}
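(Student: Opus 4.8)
I would work with the scale-invariant quantity $W := \sup \big\{ \sum_{e \in T} |e|^2 \big\}$, the supremum being over all finite point sets $P$ inside the closed unit square and all minimum spanning trees $T$ of $P$, where $|e|$ is the Euclidean length of $e$; the claim is $W \le 3.42$, and a bound for the unit square transfers to any square by scaling. Two simplifications help: a minimum spanning tree costs no more than any spanning tree of $P$, so it is enough to exhibit one good tree for each $P$ (for instance the Hamiltonian path obtained by ordering $P$ along a Hilbert curve of the square); and one may freely invoke the standard facts that $\mathrm{MST}(P)$ sits inside the Gabriel graph of $P$ — so for every MST edge $uv$ the open disk having $uv$ as a diameter is empty of points of $P$ — and that two MST edges meeting at a vertex form an angle of at least $60^{\circ}$, so every vertex has degree at most $6$.

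The estimate I would push through rests on a layer-cake identity from Kruskal's algorithm. For $t > 0$ let $N(t)$ be the number of connected components of the graph on $P$ that joins every pair of points at distance less than $t$; equivalently it is the number of Kruskal clusters once all edges shorter than $t$ are contracted. Then $N$ is non-increasing, $N(0^{+}) = |P|$, $N(t) = 1$ once $t$ exceeds the longest MST edge, and $N$ decreases by one exactly as $t$ passes each MST edge length, so $\sum_{e \in \mathrm{MST}(P)} |e|^2 = \int_0^{\infty} 2t\,(N(t) - 1)\, dt$. The geometric input is a packing bound: distinct clusters counted by $N(t)$ are pairwise at distance at least $t$, so one representative per cluster gives a $t$-separated subset of the unit square and hence $N(t) \le c_0/t^2$ for an absolute constant $c_0$ (up to lower-order corrections for $t$ of order $1$). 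Feeding $N(t) \le \min(|P|,\, c_0/t^2)$ into the integral already shows the tree weight is finite for each $P$.

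The real obstacle, and the place where the explicit $3.42$ of \cite{AichholzerAAB+} is earned, is that this crude bound gives only $\int_0^{\sqrt 2} 2t\,N(t)\, dt = O(\log |P|)$: $\int 2t \cdot t^{-2}\, dt$ is logarithmically divergent, and truncating the integrand at $t \sim |P|^{-1/2}$ costs a logarithm — precisely the spurious $\log$ that a naive quadtree, Bor\r{u}vka, or space-filling-curve construction also pays. Removing it requires exploiting that $N(t)$ cannot stay near its packing bound $c_0/t^2$ across a wide band of scales: whenever $N(t) \approx c_0/t^2$ the $N(t)$ clusters essentially tile the square by cells of side $\Theta(t)$, so neighbouring clusters are only $\Theta(t)$ apart and merge already at scale $O(t)$, which forces $N$ to be small — the configuration ``connected'' — at all coarser scales. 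Making this self-improvement quantitative (so that $\int 2t\,(N(t)-1)\, dt$ is governed by a bounded multiple of the single ``critical'' scale rather than by the number of scales), optimizing the packing constant, and disposing of the $O(1)$ edges of length $\Omega(1)$ by a finite case analysis using the MST structure above, is what delivers $W \le 3.42$. As a check, the four corners of the unit square have $\mathrm{MST}$-weight exactly $3$ under squared distances, so the constant cannot be pushed below $3$.
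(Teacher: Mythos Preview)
The paper does not prove this theorem at all: it is quoted as a black box from \cite{AichholzerAAB+} and then used to bound the funnel-tree cost inside a ball. So there is no ``paper's own proof'' to compare against; the relevant comparison is between your sketch and the argument in the cited reference.

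Your proposal, however, has a genuine gap. Everything up to and including the layer-cake identity
\[
\sum_{e\in\mathrm{MST}(P)}|e|^{2}\;=\;\int_{0}^{\infty}2t\,\bigl(N(t)-1\bigr)\,dt
\]
and the packing bound $N(t)\le c_{0}/t^{2}$ is correct, and you correctly diagnose that feeding this bound into the integral yields only $O(\log|P|)$, not $O(1)$. But the paragraph that is supposed to remove the logarithm is not an argument, it is a promissory note: you assert that ``$N(t)$ cannot stay near its packing bound across a wide band of scales'' and that making this quantitative ``is what delivers $W\le 3.42$,'' without actually doing it. The heuristic you give (near-tiling at scale $t$ forces merging at scale $O(t)$) is not obviously strong enough---it only says that if $N(t)\approx c_{0}/t^{2}$ then $N(Ct)$ is small for some fixed $C$, which still permits $N$ to sit at a constant fraction of the packing bound across $\Theta(\log|P|)$ dyadic scales and hence does not by itself kill the logarithm. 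Getting a uniform $O(1)$ bound from this route, let alone the specific constant $3.42$, requires real work that your write-up does not supply.

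For contrast, the standard route to this kind of bound bypasses the layer-cake integral entirely. One uses directly the structural fact you mention---every MST edge $uv$ has an empty Gabriel disk of diameter $|uv|$, and incident MST edges meet at angles at least $60^{\circ}$---to associate to each edge $e$ a planar region of area $\Theta(|e|^{2})$ (a lune or an angular sector of the Gabriel disk) so that these regions are pairwise disjoint and lie within a bounded dilation of the unit square. Summing areas then gives $\sum_{e}|e|^{2}=O(1)$ in one stroke, with no logarithmic divergence to repair; the constant $3.42$ comes from optimizing the shapes and handling the boundary. That is the style of argument in \cite{AichholzerAAB+}, and it is both shorter and sharper than the integral approach you outline.
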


We can apply this theorem to bound the cost of the funnel trees within
any demand ball in the solution within a factor of at most 3.42 of the
cost of the  ball. Indeed, by scaling the diameter  of the demand ball
to correspond to  unit distance, the above theorem shows  that for any
finite set  of terminal  nodes (i.e. nodes  in the  destination group)
within the  ball, a funnel  tree which is  an MST that  connects these
terminal  nodes to  the center  of  the ball  has total  cost at  most
3.42.  The cost  of the  demand ball  is the  square of  the Euclidean
distance of  the ball radius  which, in  the scaled version,  has cost
$(\frac{1}{2})^2 = \frac{1}{4}$.  This shows that the  funnel tree has
cost at most  13.68 times the cost of the  funnel ball. This motivates
an algorithm that uses balls of  varying radii around each source node
as a ``set" that has cost equal  to the square of the ball radius (the
ball cost) and  covers all the terminal nodes within  this ball (which
can be connected in a funnel tree  of cost at most 13.68 times that of
the demand ball).

{\em Algorithm Cover-and-Grow}
\begin{enumerate}
\item Initialize the solution to be empty.
\item While there is still an unsatisfied demand edge
\begin{itemize}
\item For every source node $s_i$,  for every possible radius at which
  there is a terminal node belonging to some destination group $T$ for
  which the  demand $(S,T)$ is  yet unsatisfied, compute the  ratio of
  the square of the Euclidean radius of the ball to the number of {\em
    as yet unsatisfied} destination groups whose terminal nodes lie in
  the ball.
\item Pick  the source  node and  ball radius  whose ratio  is minimum
  among all the available balls, and  add it to the solution (both the
  demand ball around this node and a funnel tree from one node of each
  destination group whose demand is unsatisfied at this point). Update
  the set of unsatisfied demands accordingly.
\end{itemize}
\end{enumerate}

\begin{theorem}
  Algorithm  Cover-and-grow  runs  in  polynomial  time  and  gives  an
  $O(\log n)$-approximate solution  for the  2-D
  g2g-anycast problem in an $n$-node graph.
\end{theorem}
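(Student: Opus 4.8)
The plan is to recognize Cover-and-Grow as the classical greedy algorithm for a suitable weighted set cover instance and then to charge the cost of the funnel trees against the cost of the balls using Theorem~\ref{thm:euc}. First I would make the reduction explicit: since the demand graph is a star with single source group $S=\{s_1,\dots,s_k\}$, the elements to be covered are the $q$ destination groups $T_1,\dots,T_q$, and for each source $s_i$ and each candidate broadcast cost $r$ (that is, $r=c_{s_i t}$ for some terminal $t$) we introduce a set $\sigma(s_i,r)$ of cost $r$ consisting of every destination group that contains a terminal $t$ with $c_{s_i t}\le r$. In each iteration Cover-and-Grow picks exactly the set minimizing (cost)/(number of not-yet-covered elements), so the balls it buys are precisely those chosen by greedy weighted set cover; hence the total \emph{ball} cost of its output is at most $H_q\cdot OPT_{SC}$, where $OPT_{SC}$ is the minimum weight of a feasible cover and $H_q=O(\log q)$.

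Next I would (i) relate $OPT_{SC}$ to the g2g optimum $OPT$ and (ii) bound the funnel-tree cost. For (i): take an optimal g2g solution and shrink each of its demand balls to the broadcast cost of the farthest destination terminal it actually serves; this value is a candidate $r$, the shrunk balls still cover every demand, and their total weight is at most the ball-cost part of $OPT$, so $OPT_{SC}\le OPT$. For (ii): whenever Cover-and-Grow adds a ball $B(s_i,r)$, take the accompanying funnel tree to be a minimum spanning tree under the funnel metric $d$ of $s_i$ together with one chosen terminal from each newly covered destination group; all these points lie inside $B(s_i,r)$, so after rescaling the ball's diameter to $1$ (which multiplies squared Euclidean distances, hence the $d$-weights, by $1/(4r)$) Theorem~\ref{thm:euc} bounds the scaled tree by $3.42$, i.e.\ the funnel tree costs at most $3.42\cdot 4r=13.68\,r$. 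Summing over all balls added, the total funnel cost is at most $13.68$ times the total ball cost, so the output costs at most $14.68$ times its total ball cost, which is $14.68\,H_q\cdot OPT=O(\log n)\cdot OPT$ since $q$ is polynomially bounded in $n$. (For $\kappa>2$ every edge inside a rescaled ball has length at most $1$, so raising lengths to the $\kappa$-th power only shrinks the tree weight, and the constant becomes $3.42\cdot 2^{\kappa}$.)

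Finally, for the running time I would note that each iteration inspects $O(kn)=O(n^2)$ candidate (source, radius) pairs, evaluates the coverage of each in polynomial time, and computes one MST, while each iteration covers at least one previously unsatisfied demand, so there are at most $q$ iterations. The part I expect to require the most care is the accounting that lets the greedy set-cover guarantee --- whose selection rule only sees ball costs --- still bound the true solution cost: this is exactly the constant-factor charging of each funnel tree to its ball through Theorem~\ref{thm:euc}, together with checking that the (shrunk) balls of an optimal g2g solution really are among the candidate sets so that $OPT_{SC}\le OPT$. Everything else is a routine instantiation of the standard $H_q$ analysis of greedy set cover, and the $(1-o(1))\ln n$ hardness stated above shows the bound is essentially tight.
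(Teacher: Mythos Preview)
Your proposal is correct and follows essentially the same route as the paper: cast Cover-and-Grow as greedy weighted set cover over pairs $(s_i,r)$, observe that the balls of an optimal g2g solution form a feasible cover so $OPT_{SC}\le OPT$, and then charge each funnel tree to its enclosing ball via Theorem~\ref{thm:euc} to get the factor $14.68$. Your write-up is in fact slightly more careful than the paper's in places---you explicitly justify that the optimal balls can be shrunk to a candidate radius, give a concrete iteration count for the running time, and spell out the $\kappa>2$ adjustment---but the argument is the same.
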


\begin{proof}
  We will use a reduction from the given 2-D g2g-anycast problem to an
  appropriate set  cover problem  as described  in the  algorithm: The
  elements of the  set cover problem are the terminal  sets $T_j$ such
  that the demand graph has the  edge $(S,T_j)$. For every source node
  $s_i\in S$,  and for every possible  radius $r$ at which  there is a
  terminal  node belonging  to some  destination group  $T$ for  which
  there  is  a demand  $(S,T)$,  we  consider  a set  $X(s_i,r)$  that
  contains all  the destination  groups $T_j$ such  that some  node of
  $T_j$ lies within this ball. The cost of this set is $r^2$.

  First, we  argue that  an optimal solution  for the  2-D g2g-anycast
  problem of cost $C^*$ gives a solution of cost at most $C^*$ to this
  set cover problem.   Next, we show how any feasible  solution to the
  set cover problem  of cost $C$ gives a feasible  solution to the 2-D
  g2g-anycast problem of cost at most $14.68C$. These two observations
  give us the  result since the algorithm we describe  is the standard
  greedy approximation algorithm for set cover.

  To see the first observation, given  an optimal solution for the 2-D
  g2g-anycast problem of cost $C^*$, we pick the sets corresponding to
  the demand  balls in the solution  for the set cover  problem. Since
  these demand balls  are a feasible solution to  the anycast problem,
  they  together  contain at  least  one  terminal  from each  of  the
  destination  groups  $T_j$   for  which  there  is   a  demand  edge
  $(S,T_j)$. These balls form a solution  to the set cover problem and
  the demand  ball costs  of the  anycast solution  alone pay  for the
  corresponding costs of  the set cover problem.   Hence this feasible
  set cover solution has cost at most $C^*$.

  For  the other  direction, given  any feasible  solution to  the set
  cover problem of cost $C$, note  that this pays for the demand balls
  around the source nodes in this  set cover solution.  Now we can use
  the implication in the  paragraph following Theorem~\ref{thm:euc} to
  construct a funnel tree for each of these demand balls that connects
  all  the terminals  within these  balls to  the source  node at  the
  center of  the ball with  cost at most 13.68  times the cost  of the
  demand ball around  the source node. Summing over all  such balls in
  the solution gives the result.
\end{proof} 

\section{Empirical Results}
\label{sec:simulations}

We conducted simulations comparing  Cover-and-Grow with four different
natural heuristics  for points embedded  in a  unit square in  the 2-D
Euclidean plane.  These  simulations allow us gain  perspective on the
real-world utility  of Cover-and-Grow vis  a vis alternatives  that do
not  possess provable  guarantees but  yet  have the  potential to  be
practical.  The  specifics of  the simulation and  the details  of the
results                are                 discussed                in
Appendix~\ref{app:simulations}. Cover-and-Grow  performs comparably to
the heuristics in  performance; and the runtime  of Cover-and-Grow was
better than the heuristics except for the T-centric approach.

%
%
\bibliographystyle{splncs03}
\bibliography{group-refs}

\appendix
\section{g2s-anycast Hardness}
\label{app:reduction}

\begin{theorem}
  Unless  $NP  =  P$  there  is  no  polynomial-time  $\alpha  \ln  n$
  approximation to  the g2s-anycast  problem, for a  suitable constant
  $\alpha > 0$.
\end{theorem}
\begin{proof}
  Our proof is by a reduction from the minimum set cover problem.  Let
  ${\cal U}$  denote a collection  of elements, and let  $X_1, \ldots,
  X_m$ denote the  sets containing elements from ${\cal  U}$.  The set
  covering objective is to minimize  the number of subsets whose union
  is  ${\cal  U}$.   We  construct   the  following  instance  of  the
  g2s-anycast  problem.   The  terminal   nodes  (i.e.  nodes  of  the
  destination  groups) correspond  to the  elements of  the set  cover
  instance.   The source  nodes correspond  to the  sets.  We  set the
  funnel-tree distance  between a source  $X_i$ and an element  $e$ in
  $X_i$ to be $1$.  The remaining distances are captured by the metric
  completion of  these distances.   We next  set the  broadcast costs.
  For any source node $X_i$ and element  $e$ in $X_i$, we set the cost
  $c(X_i, e)$ to  be $L$, for a  suitably large $L \gg n$;  for all $e
  \notin X_i$, we set $c(X_i, e)$ to $M \gg L$.

  If there is a solution of cost  $C$ to the set cover instance, there
  is a solution of  cost $CL + n$ to the  g2s-anycast problem.  On the
  other hand,  consider any solution  to the g2s-anycast  problem.  It
  incurs a  funnel tree cost  of at least  $n$.  If, in  the solution,
  every source node broadcasts only to terminal nodes corresponding to
  its set  in the set cover  instance, then the broadcast  cost equals
  $L$ times  the cost  of the  resulting set  cover obtained  by those
  source nodes  that broadcast to at  least one terminal; we  refer to
  such as a  solution as a canonical solution.  If  the solution has a
  source node that broadcasts to  a terminal outside its corresponding
  set, then the broadcast cost is at least $M$.  Given that there is a
  solution to the set cover instance using all of the sets, there is a
  canonical solution to the anycast problem  of cost at most $mL + n$.
  By selecting $M$ to  be $\Omega(m L \ln n)$, we  can ensure that any
  $O(\lg n)$-approximation to the  g2s-anycast instance will produce a
  canonical solution.  From  this, we obtain that if we  set $L$ to be
  sufficiently larger  than $n$, any  $\alpha \ln n$  approximation to
  the  g2s-anycast  instance  yields  an $(\alpha  -  \epsilon)\ln  n$
  approximation, for  an $\epsilon >  0$ that can be  made arbitrarily
  small  by  making $L$  sufficiently  large.   This, along  with  the
  hardness for set cover in~\cite{DinurS14}, completes the proof of the
  theorem.
\end{proof}

\section{Euclidean g2g Hardness}
\label{app:reduction2}
Recall that in the  set covering  problem, we are  given a  ground set $E$  of $n$
elements and a collection of subsets $X_1, \ldots, X_m \subseteq E$ of
elements, and  the goal is to  find a minimum number  of these subsets
whose union is $E$.
We  present  an
approximation-preserving reduction  from a  given instance of  the set
covering problem to one of the 2-D g2g-anycast as follows.

For each subset $X_i$ we pick a point $x_i$ in the plane such that any
pair of such ``set-points" are quite far from each other (distance $>>
n$)  in the  plane.  Our source  set  $S$ will  consist  of these  $m$
different set-points.

For each  point $x_i$,  we pick  a point $y_i$  at unit  distance from
$x_i$ in  the plane to place  copies of the element-points.   For each
element $e \in E$, we create a destination group $T_e$, which consists
of as many nodes as the number of  sets in which $e$ occurs. If $e \in
X_i$, then  we create a terminal  node $t(e)_i$ at the  point $y_i$ in
the  plane. Note  that all  elements that  belong to  a set  $X_i$ are
co-located  in the  point $y_i$  at unit  distance from  the set-point
$x_i$. The demand  graph for the resulting g2g-anycast  problem is all
pairs of  the form $(S,T_e)$  for every element  $e$ in the  set cover
problem.  The following observations are now immediate.

\begin{lemma}
  Given an optimal solution to  the set cover problem with $k^*$ sets,
  there is a  solution to the 2-D g2g-anycast  problem using the
  above reduction of cost $2k^*$.
\end{lemma}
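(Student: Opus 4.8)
The plan is to turn an optimal set cover directly into a g2g-anycast solution by activating exactly the sources corresponding to the chosen sets, each with a unit-radius ball. First I would fix an optimal collection $\{X_{i_1}, \ldots, X_{i_{k^*}}\}$ of sets whose union is $E$. For each chosen set $X_{i_j}$, activate the source point $x_{i_j}$ with a broadcast ball of Euclidean radius exactly $1$; since under $\kappa = 2$ the broadcast cost is the squared Euclidean distance, this ball has cost $1^2 = 1$ and reaches precisely the point $y_{i_j}$ (every other set-point $x_{i'}$ is at distance $\gg n$, and every other element-point $y_{i'}$ sits within distance $1$ of its own far-away $x_{i'}$, hence is far from $x_{i_j}$). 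All terminal nodes $t(e)_{i_j}$ with $e \in X_{i_j}$ are co-located at $y_{i_j}$, so they all lie inside this ball.

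Next I would attach to $x_{i_j}$ the funnel tree consisting of the single edge $(x_{i_j}, y_{i_j})$ together with the zero-cost edges linking the co-located copies at $y_{i_j}$; this connects every terminal $t(e)_{i_j}$ to the source $x_{i_j}$ at funnel-tree cost $1^2 = 1$. Thus each activated source contributes ball cost $1$ plus funnel cost $1$, i.e. $2$, and summing over the $k^*$ chosen sets yields total cost $2k^*$. Because the activated sources are at mutually far-apart locations, their funnel trees are edge-disjoint automatically, so the no-edge-sharing restriction is vacuous here.

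Finally I would check feasibility: every demand $(S, T_e)$ is satisfied because the chosen sets cover $E$, so $e \in X_{i_j}$ for some $j$, and then the node $t(e)_{i_j} \in T_e$ lies in the ball around $x_{i_j}$ and is contained in that source's funnel tree. The only care needed is the cost bookkeeping — remembering that both the ball cost and the funnel-edge cost are squared Euclidean distances (each equal to $1$), and noting that co-locating all of $X_{i_j}$'s terminal copies at $y_{i_j}$ means no extra funnel edges are ever required. I do not expect a genuine obstacle: the construction is the natural one and the bound $2k^*$ falls out immediately.
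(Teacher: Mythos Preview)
Your proposal is correct and follows essentially the same approach as the paper: for each set in the optimal cover, place a unit-radius ball at the corresponding set-point and use the single edge to the co-located terminals at $y_i$ as the funnel tree, giving cost $1+1=2$ per chosen set. Your added remarks about far-away points, edge-disjointness, and explicit feasibility verification are more detailed than the paper's brief proof but do not change the argument.
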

\begin{proof}
To convert an optimal solution of  the set cover problem, for each set
$X_i$ in the optimal set cover, we pick the set-point $x_i$ and draw a
unit ball around it, which encloses the point $y_i$ containing all the
terminal  nodes  corresponding  to   elements  contained  in  the  set
$X_i$. For  all these element  terminal nodes co-located at  $y_i$, we
build a funnel tree of a single edge from $y_i$ back to $x_i$. The sum
of the Euclidean length squared costs of the ball around $x_i$ and the
funnel tree is  two. Repeating for every set  in the optimal solution,
we get a solution to the g2g-anycast problem.
\end{proof}

\begin{lemma}
  Given an  optimal solution  to the  2-D g2g-anycast  problem arising
  from a reduction from a set cover problem as described above of cost
  $C$, there is a solution to the  set cover from which it was derived
  that contains at most $\frac{C}{2}$ sets.
\end{lemma}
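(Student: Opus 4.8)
The plan is to read a feasible set cover of size at most $C/2$ directly off the optimal g2g-anycast solution. First I would normalize that solution: for each source $x_i$, if its funnel tree is empty set its broadcast radius to $0$, and otherwise set its broadcast radius to exactly the largest squared Euclidean distance from $x_i$ to a terminal in its funnel tree. This never increases the cost and never destroys feasibility (a demand covered through $x_i$ is covered via a terminal in its funnel tree, which stays inside the shrunk ball), so we may assume $x_i$'s ball is nonempty if and only if its funnel tree is nonempty. The one magnitude fact the argument rests on is this: applying the construction behind the preceding lemma to the cover consisting of all $m$ sets gives a feasible solution of cost $2m$, so the optimal cost satisfies $C \le 2m$, while the construction is free to place the set-points pairwise far apart, so we may assume the squared distance between any two distinct set-points — and hence the squared distance between $x_i$ and $y_j$ whenever $i \ne j$ — strictly exceeds $C$. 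Consequently, in the optimal solution no ball centered at $x_i$ reaches a terminal located at $y_j$ for $j \ne i$, and no funnel-tree edge joins a node of the \emph{cluster} $\{x_i\} \cup \{\text{terminals at } y_i\}$ to a node of any other cluster: either would by itself cost more than $C$.

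Given this, I would take $\mathcal{C} = \{X_i : x_i\text{'s funnel tree is nonempty}\}$ as the set cover. For feasibility, fix an element $e$; the demand $(S, T_e)$ is satisfied, so some source $x_i$ both broadcasts a ball containing a terminal of $T_e$ and connects that terminal in its funnel tree. By the magnitude fact that terminal sits at $y_i$, which forces $e \in X_i$; and $X_i \in \mathcal{C}$, so $e$ is covered. For the size bound, fix $X_i \in \mathcal{C}$ and a terminal $t$ in its funnel tree; $t$ lies at $y_i$. The ball around $x_i$ contains $t$, so its broadcast cost is at least $\|x_i - y_i\|^2 = 1$. The funnel tree contains a path from $x_i$ to $t$; by the magnitude fact all of its internal nodes lie in cluster $i$, and since the only nodes of cluster $i$ other than $x_i$ are terminals co-located at $y_i$, this path begins with an edge of squared length exactly $1$ — there is no point of the instance strictly between $x_i$ and $y_i$, so relaying through an intermediate node, the phenomenon the $\ell_2^2$ setting normally exploits, simply is not available here. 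Hence the funnel tree of $x_i$ costs at least $1$. Since the objective is the sum over sources of broadcast cost plus funnel-tree cost, each $X_i \in \mathcal{C}$ is charged at least $1 + 1 = 2$, so $2|\mathcal{C}| \le C$, i.e. $|\mathcal{C}| \le C/2$.

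The only subtlety — and the only thing worth checking carefully — is the pair of magnitude claims: that a single ball cannot swallow two clusters and that no funnel tree undercuts cost $1$ by relaying. Both reduce to the same observation: $C$ is polynomially bounded (at most $2m$) while the inter-set-point separation is chosen larger still, so any ball or funnel-tree edge crossing between clusters already exceeds the budget $C$, and within a cluster the instance contains no point between $x_i$ and $y_i$. Everything else — the normalization, the covering argument, and the charging — is routine.
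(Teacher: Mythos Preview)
Your proposal is correct and follows essentially the same approach as the paper: both extract the set cover from the set of sources with nonempty funnel trees and charge at least $2$ (one unit for the ball, one for the funnel tree) to each such source. The paper's proof simply asserts that ``all minimal solutions correspond to unit-radius balls around a set of set-points $x_i$, and the funnel trees for each of these points in the solution all consist of a single edge from $y_i$ to $x_i$''; you supply the justification the paper omits, namely the magnitude argument (bounding $C \le 2m$ via the trivial cover and then using the large inter-cluster separation to rule out any ball or funnel-tree edge crossing clusters) and the observation that within a cluster there is no intermediate relay point, so the funnel tree must pay the full unit edge.
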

\begin{proof}
Observe  that all  minimal solutions  correspond to  unit-radius balls
around a  set of set-points  $x_i$, and the  funnel trees for  each of
these points in  the solution all consist of a  single edge from $y_i$
to $x_i$.   Since the  g2g-anycast solution  is feasible,  these balls
around the set-points cover all demands  and hence form a feasible set
cover. The number of sets in the solution is exactly $\frac{C}{2}$.
\end{proof}
We now  get the  following lower bound  on the approximability  of the
problem using~\cite{DinurS14}.

\begin{theorem}
  The 2-D Euclidean  version of the g2g problem on  $n$ nodes does not
  permit  a polynomial-time  $(1-o(1))\ln n$  approximation algorithm,
  unless $NP = P$.
\end{theorem}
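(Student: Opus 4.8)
The plan is to read the two lemmas just proved as an \emph{exact} approximation-preserving reduction from set cover and then plug in the Dinur--Steurer bound. The first step is to combine the lemmas into a statement about optima: writing $k^*$ for the minimum number of sets in a cover and $C^*$ for the optimal cost of the constructed 2-D g2g-anycast instance, the first lemma gives $C^* \le 2k^*$, while the second lemma — which uses only that every \emph{minimal} feasible g2g solution is a union of unit balls around set-points together with the single-edge funnel trees $y_i x_i$ — gives $k^* \le C^*/2$; hence $C^* = 2k^*$. The factor $2$ is a uniform scaling and so does not affect ratios.

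The second step is to upgrade the second lemma from optimal to arbitrary feasible solutions, since a purported g2g approximation algorithm outputs some feasible, not necessarily minimal, solution. Given any feasible g2g solution of cost $C$ arising from the reduction, I would run a polynomial-time ``cleanup'': replace every broadcast ball around a set-point $x_i$ by the unit ball around $x_i$, and every funnel tree by the single edge from $y_i$ to $x_i$. Because the set-points are placed pairwise far apart (as in the reduction, taking the separation a suitable polynomial in $m$), no ball occurring in a solution of polynomially bounded cost can reach a foreign set-point, so the cleanup never destroys coverage and never increases cost. The cleaned-up solution is minimal, so by the counting in the second lemma its cost is twice the size of a feasible cover it induces; since cost did not increase, that cover has at most $C/2$ sets. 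Consequently a polynomial-time $\rho$-approximation for 2-D g2g-anycast yields a cover of size at most $\rho C^*/2 = \rho k^*$, i.e. a polynomial-time $\rho$-approximation for set cover.

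The final step is to invoke the $(1-o(1))\ln n$-hardness of set cover~\cite{DinurS14}, and here lies the one delicate point, which I expect to be the main obstacle. The destination groups of the g2g instance biject with the $n$ ground-set elements, but its node count is $N = m + \sum_i |X_i|$, including the $m$ set-points and all element-incidence nodes; to preserve the \emph{constant} $1$ in front of $\ln$ (rather than settling for mere $\Omega(\log n)$-hardness) one needs $N = n^{1+o(1)}$, i.e. one must invoke the Dinur--Steurer bound on a family of hard set cover instances whose number of sets and total number of incidences are near-linear in $n$, so that $\ln N = (1+o(1))\ln n$. Granting this, for every $\epsilon>0$ a polynomial-time $(1-\epsilon)\ln N$-approximation for 2-D g2g-anycast would give a polynomial-time $(1-\epsilon')\ln n$-approximation for set cover with $\epsilon'>0$, contradicting~\cite{DinurS14} unless $P=NP$; letting $\epsilon\to 0$ yields the theorem. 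Everything else — the two lemmas and the cleanup — is routine.
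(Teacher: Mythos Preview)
Your approach is identical to the paper's: the same reduction, the same two lemmas, the same appeal to Dinur--Steurer. The paper's entire proof of the theorem is the single sentence ``We now get the following lower bound on the approximability of the problem using~\cite{DinurS14},'' placed immediately after the two lemmas; it supplies no further argument.

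Both refinements you add --- the polynomial-time cleanup turning an arbitrary feasible g2g solution into a minimal one so that the counting in the second lemma applies, and the analysis of how the node count $N = m + \sum_i |X_i|$ of the constructed g2g instance compares to the ground-set size $n$ of the set-cover instance --- are absent from the paper. The first is a routine omission. The second, which you rightly single out as the delicate point, is a genuine gap the paper glosses over: without arguing that the hard Dinur--Steurer instances satisfy $\ln N = (1+o(1))\ln n$, the reduction yields only an $\Omega(\log N)$ lower bound rather than the stated $(1-o(1))\ln N$ constant. So you have reproduced the paper's argument and, in the process, located a lacuna in it.
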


\section{Empirical Results}
\label{app:simulations}

We conducted simulations  comparing Cover-and-Grow with four different
heuristics for  points embedded in a unit square. Both broadcast costs 
and the funnel-tree costs are assumed to be the square of the Euclidean distances (using the path-loss exponent value of $\kappa = 2$). 

In  our  simulations we  had  one $S$  group  (this  is sufficient  as
mentioned in  Section \ref{subsec:formulation}) and  for $|S|$ we
chose  $1,4,16$ and  $64$.   We had  $10$  $T$ groups  each with  $10$
terminals.   We   ran  our  trials   on  two  different   basic  point
distributions; the  uniform distribution over a unit square in the plane and  a  Gaussian distribution in the whole two dimensional plane. Our  results are  averaged over  $100$
trials for each choice of  parameter settings. The variance across the
trials was  negligible (and so we do  not show any error  bars as they
would only clutter our plots).  The  simulations were run on
an enterprise  class server with  an Intel(R) Core(TM)  i7-4500U (dual
core) CPU  @ 3.0 GHz  Turbo with  32 GB of  RAM.  The entire  suite of
simulations  took   50  hours  to  complete.   Due   to  the  inherent
combinatorial explosiveness of  g2g-anycast it was entirely infeasible
to compute the  optimal solution; therefore, in our  figures we depict
the quality of the solutions of the heuristics relative to the quality
of Cover-and-Grow normalized to $1$.

We now  describe the  four heuristics we  implemented (in  addition to
Cover-and-Grow).  The descriptions  below only detail the construction
of the  funnel-trees since it follows  that in a  minimal solution the
ball at each $s \in S$ node will be the smallest one that encloses the
funnel-trees containing $s$.
\begin{itemize}
\item \emph{Smallest  Edge} repeatedly adds the smallest edge not yet
  in the set  which does not create a cycle or a component
  with two $S$ nodes. 
  The process stops  when every $T_j$ has a vertex in some
  component with an $S$ node.  It then (repeatedly) removes all edges that
  are in a component with no $S$ node, as well as the largest edges whose
  removal would not result in a disconnection of any $T_j$ from $S$.
  Note that this heuristic requires re-computation of shortest distances between sets (current components) and nodes at every iteration which can make it quite time-consuming.
\item \emph{T-Centric}  for each $T_j$, finds the  pair $s\in S,  t\in T_j$  such that
  $d(s,t)$ is minimized and assigns $t$  to $s$. For each $s$, this process
	builds an MST on $s$ and the nodes $t$ which were assigned to $s$.
\item \emph{T-Adaptive} grows clusters  starting with each $s$ in its
  own cluster.  It repeatedly finds  the closest pair $r,t$  such that
  $r$ is in a cluster and $t$ is in a $T_j$ none of whose nodes are in
  a cluster yet and adds edge $r,t$. If we think of \emph{Smallest  Edge} 
  as a loose analog of Kruskal's algorithm for MSTs tailored to our problem, then \emph{T-Adaptive} would be the corresponding Prim variant. Unlike 
  \emph{Smallest  Edge}, which uses Steiner nodes, this heuristic requires only distances from any source cluster and a unassigned terminal node making it less intensive computationally.
\item \emph{Smallest Increment} grows  clusters starting with each $s$
  in it's own cluster. It repeatedly  finds the $r,t$ such that $r$ is
  in a  cluster and $t$  is in a  $T_j$ none of  whose nodes are  in a
  cluster yet  and such that attaching  $t$ to $r$  using the shortest
  path increases  the total cost of  the solution (i.e. funnel-tree  cost as
  well as ball cost) the least. Just like \emph{Smallest  Edge}, this heuristic also requires re-computation of shortest paths between clusters and nodes at every step, making it potentially time intensive. This heuristic is also similar to \emph{T-Adaptive} in growing from source clusters but the differences are the consideration of not just direct edges but shortest paths, as well as the additional increase due to the broadcast cost.
\end{itemize}

\begin{figure}[t]
\centering
\begin{subfigure}[b]{0.45\textwidth}
\includegraphics[width=\textwidth]{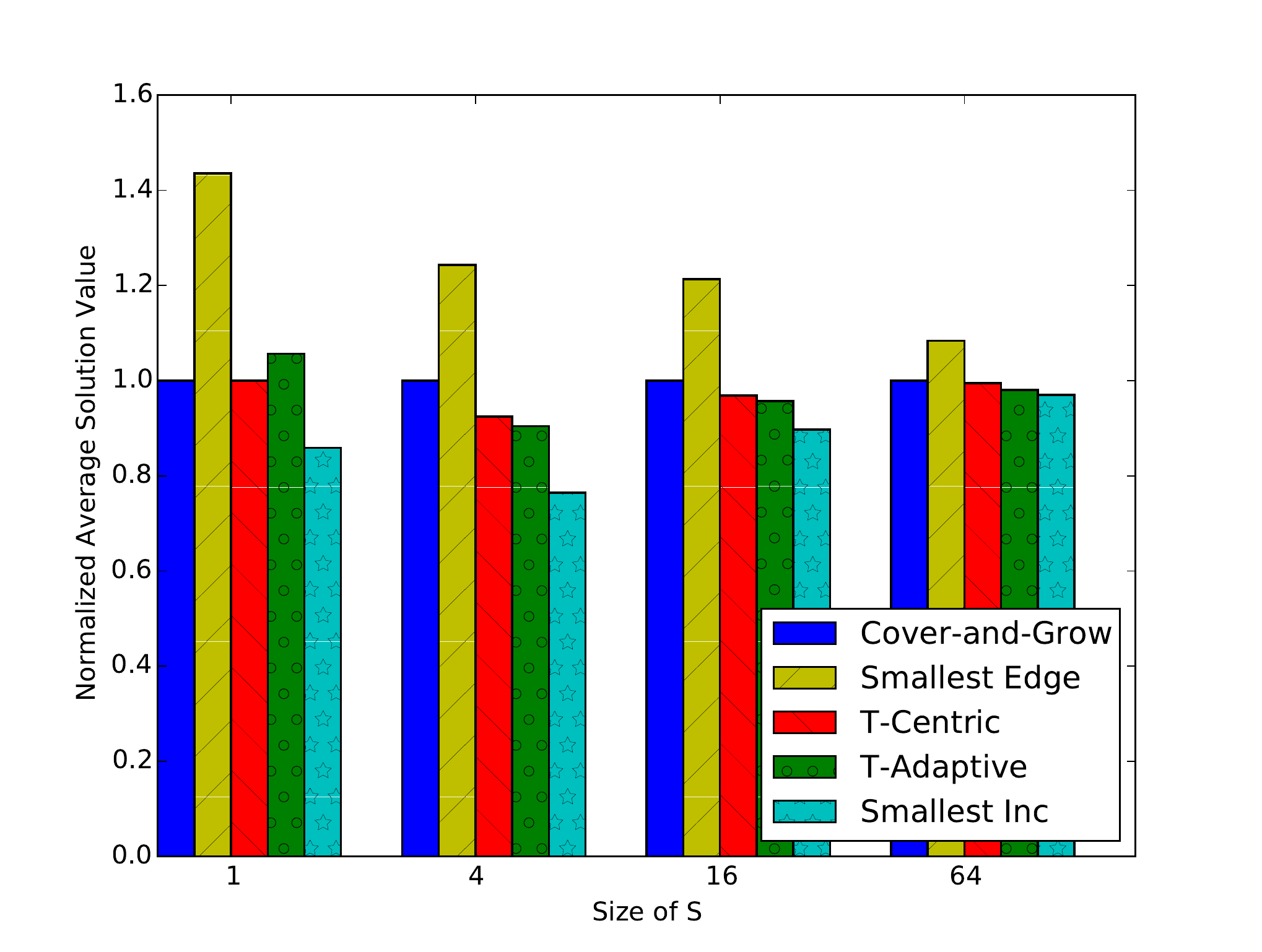}
\caption{Uniform distribution. }
\label{fig:uniNorm}
\end{subfigure}
\begin{subfigure}[b]{0.45\textwidth}
\includegraphics[width=\textwidth]{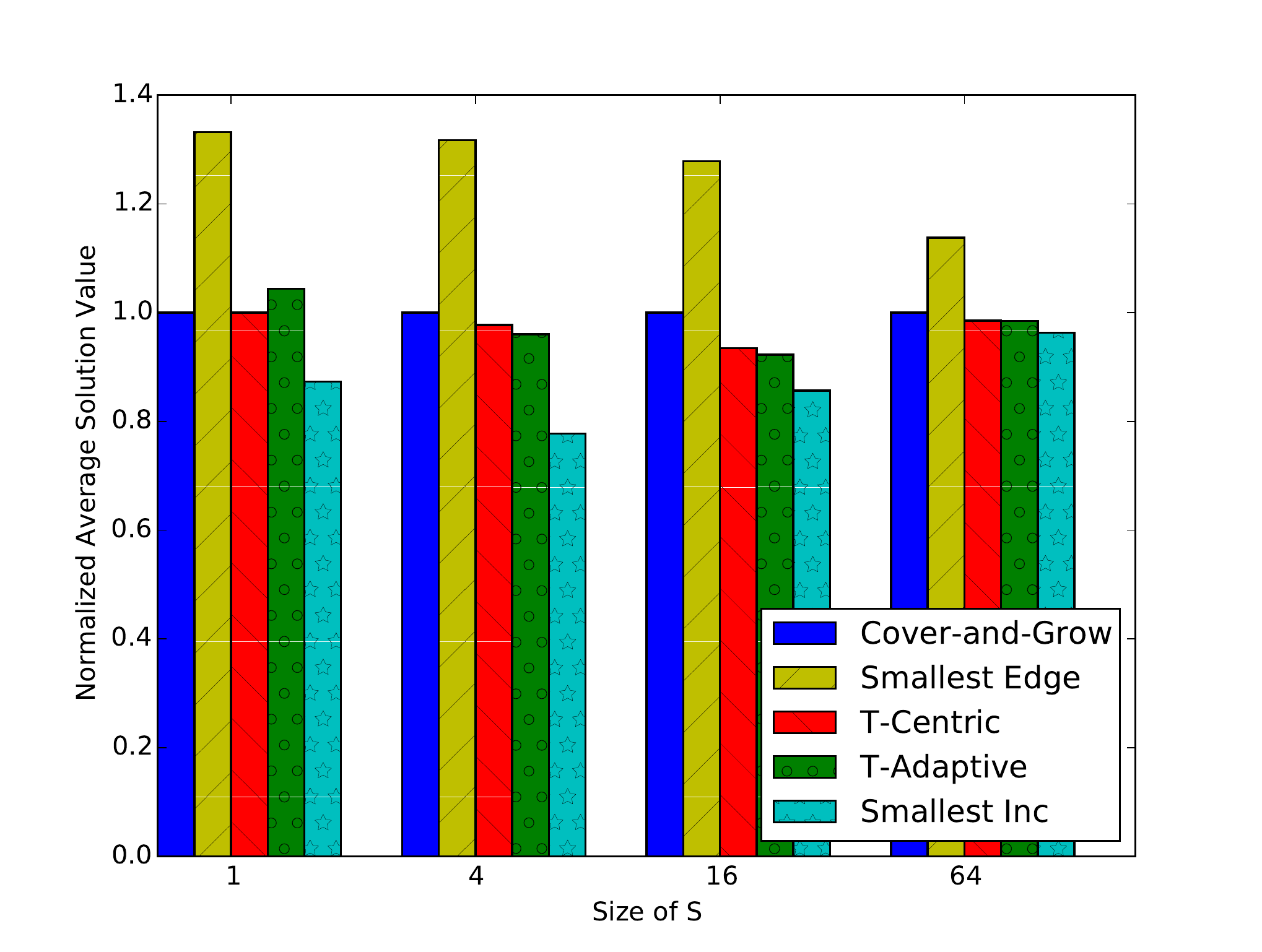}
\caption{Gaussian distribution. }
\label{fig:gausNorm}
\end{subfigure}
\caption{Relative costs of the results of the algorithms on two different distributions. }\label{fig:results}
\end{figure}

Figures~\ref{fig:uniNorm} and  \ref{fig:gausNorm} show the  quality of
the solution of the  heuristics relative to Cover-and-Grow (normalized
to  $1$)  under  the   two  distributions  of  points.  Cover-and-Grow
performed as  well as  the heuristics, losing  out only  marginally to
Smallest  Increment.  Figure~\ref{fig:run1}  shows
the   (absolute)  runtimes   of  the   heuristics  under   the  uniform
distribution.   (We  do  not   show  the  runtimes  for  the  Gaussian
distribution  since the  plot  is  identical.) The runtimes plot was plotted
on a logrithmic scale due to the large differences in the runtime.
{\em Smallest Edge}  and  {\em Smallest  Increment} both
incorporate Steiner nodes that have the potential to greatly
reduce the weight  of the funnel trees with  respect to $\ell_2^2$.
On the flip side, allowing  Steiner nodes  increases the  runtime  of these
algorithms by up to a factor $n$.

\begin{wrapfigure}{r}{0.5\textwidth}
\vspace{-20pt}
\includegraphics[width=0.42\textwidth]{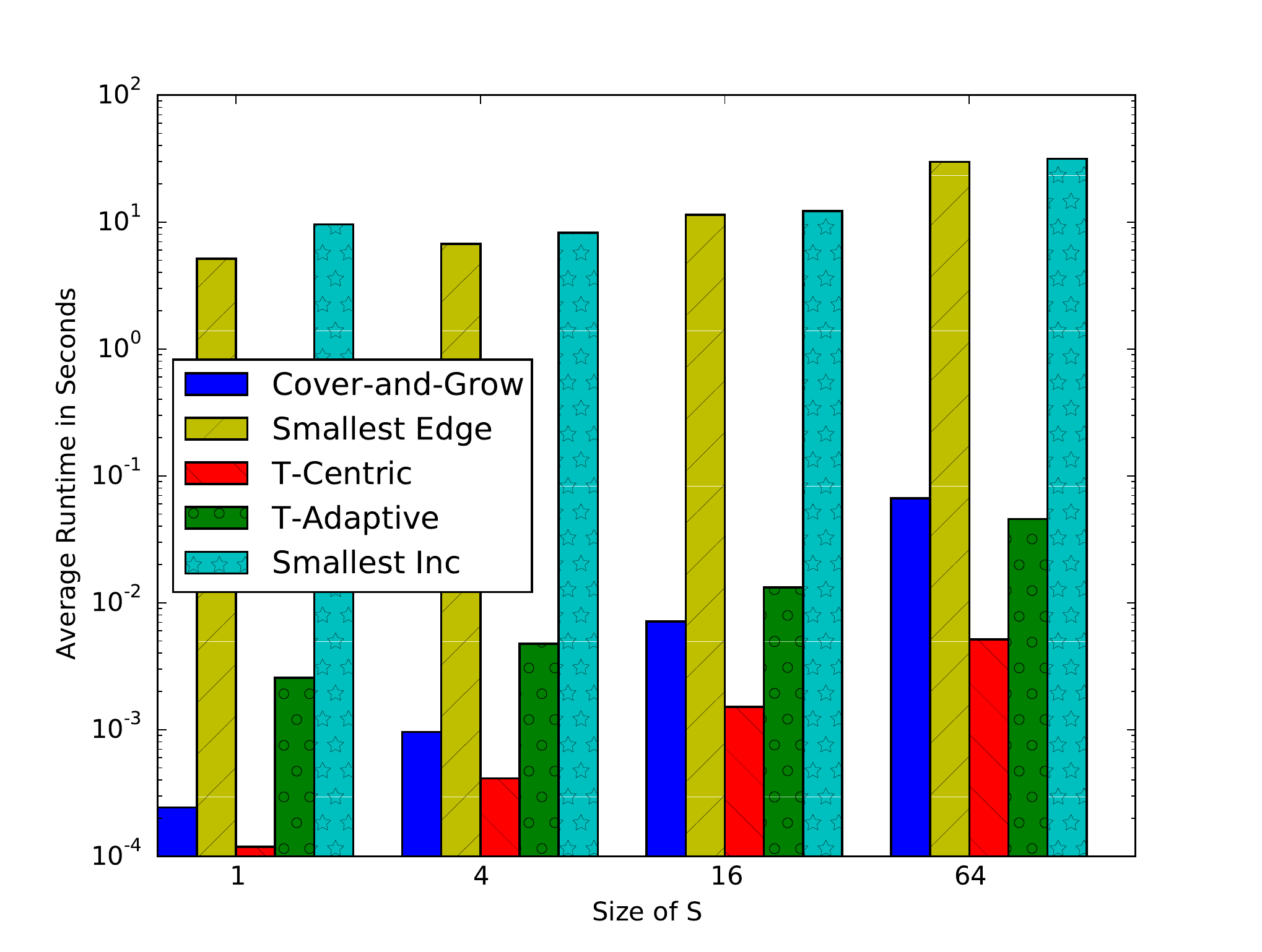}
\caption{Absolute runtimes in seconds for the Uniform Distribution. Runtime for the Gaussian distribution was identical, as the algorithms did not depend on the distribution of points.}
\label{fig:run1}
\vspace{-10pt}
\end{wrapfigure}

The  picture that emerges  from the  plots is  that Smallest  Edge and
Smallest Increment are impractical time-wise. T-Adaptive performed well on the larger instances and scaled well time-wise, but did not do as well as Cover-and-Grow on small instances. This leaves T-Centric  which is  about  the  same as  Cover-and-Grow  in terms  of
quality of solution, and in general runs much faster than Cover-and-Grow. T-centric is by far the fastest as it does not depend on what has been already added or steiner nodes, so it can make all the assignments in one iteration. However,  the following simple example shows that
T-Centric can be as much as  a factor $q$ worse than the optimal:
consider the unit square  with corners $(0,0), (0,1),(1,0),(1,1)$; let
$S = \{(\frac{1}{q},0),(\frac{2}{q},0),\ldots,(1,0)\}$ and let $1 \leq
i  \leq  q,  T_i  =  \{(\frac{i}{q},1),(0,1)\}$; it  is  easy  to  see
T-Centric's  solution (ball and  funnel-trees) is  $\Omega(q)$ whereas
the optimal  is $O(1)$. Thus,  not only does Cover-and-Grow  come with
provable guarantees but in practice, it is superior to the natural
alternative that we have  been able  to come up  with. This  begs the
question -  why does Cover-and-Grow do  so well even though  it too is
myopic?  We believe that the answer  lies in  the fact that  by focusing  on the appropriate density ratio it is greedy in an intelligent way avoiding corner case like the one depicted above.

\end{document}